\renewcommand{\epsilon}{\varepsilon}
\numberwithin{equation}{section}
\renewcommand{\phi}{\varphi}
\newtheorem{lemma}{Lemma}
\newtheorem{theorem}{Theorem}
\newtheorem{proposition}{Proposition}
\newtheorem{corollary}{Corollary}
\newtheorem{remark}{Remark}
\renewcommand{\epsilon}{\varepsilon}
\renewcommand{\l}{\lambda}
\newcommand{\bl}{\bm{\lambda}}
\def\bo{\boldsymbol{\omega}}
\newcommand{\bx}{\bm{x}}
\newcommand{\by}{\bm{y}}
\newcommand{\bz}{\bm{z}}
\def\beq{\begin{equation}}
	\def\eeq{\end{equation}}
\def\beqq{\begin{equation*}}
	\def\eeqq{\end{equation*}}
\def\C{\mathbb{C}}
\def\R{\mathbb{R}}
\def\a{\alpha}
\def\l{\lambda}
\def\o{\omega}
\def\SC{\mathcal{S}}
\def\T{T}
\def\TT{T^\dagger}
\def\ve{\varepsilon}
\def\vf{\varphi}
\let\Re\relax
\let\Im\relax
\DeclareMathOperator{\Re}{Re}
\DeclareMathOperator{\Im}{Im}
\newcommand{\rf}[1]{(\ref{#1})}
\def\S{\mathcal{S}}
\begin{document}
	
	\begin{center}
		{\bf \Large Ruijsenaars spectral transform}
		
		\vspace{0.4cm}
			
		{N. Belousov$^{\dagger}$, S. Khoroshkin$^{\circ\ast}$}
		
		\vspace{0.4cm}
		
		{\small \it
		$^\dagger$Beijing Institute of Mathematical Sciences and Applications, \\
		Huairou district, Beijing, 101408, China \vspace{0.2cm}\\
		$^\circ$National Research University Higher School of Economics, \\
		Myasnitskaya 20, Moscow, 101000, Russia \vspace{0.2cm}\\
		$^\ast$Department of Mathematics, Technion,
		Haifa, Israel
		}
	\end{center}
	
	\begin{abstract} 
		Spectral decomposition with respect to the wave functions of Ruijsenaars hyperbolic system defines an integral transform, which generalizes classical Fourier integral. For a certain class of analytical symmetric functions we prove inversion formula and orthogonality relations, valid for complex valued parameters of the system. Besides, we study four regimes of unitarity, when this transform defines isomorphisms of the corresponding $L_2$ spaces.
	\end{abstract}
			
	\vspace{-0.3cm}
			
	\tableofcontents
	
	\newpage
			
	\section{Introduction} 
			{\bf 1.} Hamiltonians of  Ruijsenaars hyperbolic system \cite{R1, HR1} and Macdonald  operators~\cite{M} can be regarded as noncompact and compact incarnations of the same family of commuting difference operators
			\begin{align}\label{H}
				H_s= \sum_{ \substack{J \subset \{1, \dots, n\} \\[1pt] |J| = s}} \, \prod_{\substack{j \in J \\ k\not\in J}} \frac{\sh \frac{\pi}{\omega_2} ( x_j - x_k - \imath g)}{ \sh \frac{\pi}{\omega_2} (x_j - x_k) } \, \prod_{j \in J} e^{- \imath \omega_1 \partial_{x_j}}. 
			\end{align}
			Both Macdonald polynomials and wave functions of Ruijsenaars hyperbolic system admit  explicit representations by iterative contour integrals \cite{AOS, HR1}. Fundamental tools in the theory of Macdonald symmetric functions~--- reproducing kernel \cite[VI.2.7]{M}, two invariant scalar products and Cauchy identity~\cite[VI.4.13]{M}~--- play the role analogous to the kernel and measure functions \cite[(1.16), (2.1)]{HR1}, relations of orthogonality and completeness~\cite[(1.68), (1.69)]{BDKK3} in Ruijsenaars system.  
			
			 Many algebraic results in Macdonald theory use symmetry of Macdonald operators with respect to the bilinear form (``another scalar product'', as it is called in Macdonald's book), which in the notation of \cite[VI.9]{M} is given by the formula
			\beqq
			 \bigl( \psi(\bz),\phi(\bz) \bigr)=\frac{1}{n! \, (2\pi)^n}\int_{|z_i|=1} \prod_{i = 1}^n\frac{dz_i}{z_i} \, \frac{\prod_{i\not=j}(z_i/z_j;q)_\infty}{\prod_{i\not=j}(tz_i/z_j;q)_\infty} \, \psi(\bz^{-1}) \, \phi(\bz),\eeqq
			where $(z; q)_{\infty} = \prod_{k = 0}^\infty(1 -q^k z)$. Here and below we denote the tuples of $n$ variables
			\beqq \bz=(z_1,\ldots,z_n),\qquad \bz^{-1}=(z_1^{-1},\ldots,z_n^{-1}).\eeqq
			Contrary to Macdonald theory, in Ruijsenaars hyperbolic system we deal with functional spaces. Analytical techniques of $L_2$ spaces force one to promote Hamiltonians~\eqref{H} to formally self-adjoint operators with corresponding restrictions on the parameters (e.g. choose real periods $\o_i$ and coupling constant $g$).
			 
			However, most of the wave functions properties can be proved for complex valued parameters $\omega_i, g$ using symmetry of Hamiltonians with respect to the bilinear form
			\beq \label{f0a} \bigl(\psi(\bx),\phi(\bx) \bigr)_{\mu}=\int_{\R^n}d\bx \; \mu(\bx) \, \psi(-\bx) \, \phi(\bx) 
			\eeq
			instead of hermitian scalar product \cite{BDKK2, BDKK3, BDKK4}. Here the measure function $\mu(\bx)$ is given by \rf{f00}. 
			
			The exceptions so far were orthogonality and completeness of the wave functions, which were proved in \cite{BDKK3} only for real $\omega_i, g$. One of the goals of this note is to extend this result to complex values of parameters and then apply it to other unitarity regimes of Ruijsenaars system, which were not considered before. 
			 
			\smallskip
			{\bf 2.} The wave functions $\Psi_{\bl}(\bx)$ of Ruijsenaars hyperbolic system with arbitrary number of particles were constructed in \cite{HR1} and further studied in \cite{HR2, HR3, BDKK2, BDKK3, BDKK4} (see also~\cite[Section~3]{BKK} and references therein for the appearance of these functions in supersymmetric gauge theories). The pairing  \rf{f0a} with the wave function $\Psi_{\bl}(\bx)$ defines the linear map~$\T$
			\begin{equation*}
				[\T \phi] (\bm{\lambda}) = \int_{\mathbb{R}^n} d\bm{x} \; \mu(\bm{x}) \, \Psi_{\bm{\lambda}}(-\bm{x}) \, \phi(\bm{x})
			\end{equation*} 
			from the certain space of functions of spatial variables~$\bx$ to functions of spectral variables~$\bl$, which we call \textit{Ruijsenaars spectral transform}. For one particle case the wave function is a plane wave
			\beqq \Psi_\l(x)=e^{2\pi \imath \l x} \eeqq
			and $T$ is the Fourier transform (or Laplace transform for other space of initial functions). 
			In general case, regarding the map $T$ as a multidimensional analog of Fourier transform, we establish basic properties of Ruijsenaars spectral transform: inversion formula and its extension to the isomorphism of corresponding $L_2$ spaces in four unitarity regimes.

			\smallskip
			{\bf 3.} Following the ideas of Fourier transform, we introduce an analog of  Schwartz space~$\S_{\bo,g}$ that depends on three parameters of the system: periods $\bo=(\o_1,\o_2)$ and coupling constant $g$. It consists of the functions $\phi(\bx)$ analytic in the strips
			$$|\Im x_j|\leq \frac{\min(\Re \omega_1, \Re\omega_2)}{2}, \qquad j = 1, \dots, n, $$ 
			and decaying in these strips, so that
			\beqq  	| \phi(\bm{x}) | \leq C 	 \exp \biggl(- 2\pi \Re \frac{g}{\o_1\o_2} \sum_{1 \leq j < k \leq n} | \Re (x_j - x_k)|-\ve\sum_{j = 1}^n | \Re x_j|\biggr)
			\eeqq 
			for some $\ve>0$. Using the bound on the wave function $\Psi_{\bl}(\bx)$ from \cite{BDKK2} we show that Ruijsenaars spectral transform $\T$ is well defined on this space and is given by absolutely convergent integral.
			
			Similarly to the Fourier transform we then need to estimate decay of the image $[\T\vf](\bl)$ when $|\bl|$ tends to infinity ($\bl \in \mathbb{R}^n$). In Fourier transform theory this can be done using differentiation and integration by parts.  Here we explore instead  difference operators diagonalized by the wave functions $\Psi_{\bl}(\bx)$. Namely, using the whole family of commuting Ruijsenaars--Macdonald operators~\eqref{H} we compose from them the generating function~$H(\l)$, see~\rf{f18}, whose spectrum is separated from zero and tends to infinity when $| \bl |$ increases. Using symmetry of~$H(\lambda)$ with respect to the bilinear form  \rf{f0a}  we derive bound on the image~$[\T\vf](\bl)$ and show that it is in the region of definition of the left inverse operator~$\TT$ given by
			\begin{align}
				[\TT \chi] (\bm{x}) & = \int_{\mathbb{R}^n} d\bm{\lambda} \; \hat{\mu}(\bm{\lambda}) \, \Psi_{\bm{\lambda}}(\bm{x}) \, \chi(\bm{\lambda}).
			\end{align}
		Here $\hat{\mu}(\bm{\lambda})$ is the dual measure function, see \rf{f0b}.
		
		We then use the regularization of the bilinear pairing
		$$ \bigl(\Psi_{\bl}(\by),\Psi_{\bl}(\bx) \bigr)_{\hat{\mu}}$$  
		by means of the relation
		\beq\label{pr1}
			\bigl( \Psi_{\bm{\l}}(\by) ,\, \Psi_{\bm{\lambda}}(\bx) \bigr)_{\hat{\mu}}^{\l, \varepsilon}=\bigl(\Psi_{\bl}(\by),\Psi_{\bl}(\bx)\cdot 	R_{\l, \epsilon}(\bm{\l}) \bigr)_{\hat{\mu}}
		\eeq
		where the function $R_{\l, \epsilon}(\bm{\l}) $ is given by the formula \rf{R}. It was proved in \cite[Proposition 2]{BDKK3} that the regularized pairing \rf{pr1} forms a delta sequence in the Schwartz space
		\begin{align*}
			\lim_{\lambda \to \infty} \, \lim_{\varepsilon \to 0^+} \bigl( \Psi_{\bm{\l}}(\by),\, \Psi_{\bm{\lambda}}(\bx) \bigr)_{\hat{\mu}}^{\l, \varepsilon} = \mu^{-1}(\bm{x}) \, \delta(\bm{x}, \bm{y}).
		\end{align*} 
		Its application gives now both inversion formula (Theorem~\ref{theoremf1})
		 \beqq [\TT\T  \vf] (\bx)=\vf(\bx), \qquad \vf\in\SC_{\bo,g}, \qquad \bx\in\R^n\eeqq
		and equivariance property of the spectral transform (Theorem~\ref{theoremf2})
		  \beqq \bigl([\T\vf_1](\bl),\,[\T\vf_2](\bl)\bigr)_{\hat{\mu}}= \bigl(\vf_1(-\bx),\vf_2(\bx)\bigr)_\mu \, ,\qquad \vf_1, \vf_2\in\SC_{\bo,g}.\eeqq 
		The latter can be regarded as a generalization of the convolution property of Laplace or Mellin transforms (see e.g.~\cite[(2.4.8)]{AAR}).

		\smallskip	     
		{\bf 4.} Finally, we apply these results to four unitarity regimes of Ruijsenaars hyperbolic system, see Theorem~\ref{theoremf3}. In the most known case, studied by Ruijsenaars \cite[Section~2.2]{R3}, the periods~$\omega_i$ and coupling constant~$g$ are real. In this case the bilinear pairing and hermitian scalar product of the wave functions coincide as well as the corresponding spectral transforms. This implies that the operators $T$ and $T^\dagger$ are isometries and by a standard procedure we deduce that in this case they extend to unitary isomorphisms between spaces $L_2^{\mathrm{sym}}(\mathbb{R}^n, \mu)$ and $L_2^{\mathrm{sym}}(\mathbb{R}^n, \hat{\mu})$. The same arguments are applied to the case of complex conjugated periods~$\bar{\omega}_1 = \omega_2$ and real coupling constant $g$.  
		
		Let us remark that for real $\omega_i, g$ unitarity of spectral transform was proved in \cite{BDKK3} in a different way, following the arguments presented in~\cite{DKM}. The crucial point was the positivity of regularization function $R_{\l,\epsilon}(\bm{\l})$~\eqref{pr1}, which allows to use Fatou's lemma. This proof can be continued to the case of~complex conjugated periods and real coupling, see Section~\ref{sec:fatou}.
			
		The other two regimes of unitarity appear when coupling constant is complex conjugated to its reflection
		\begin{equation}\label{g-restr}
			\bar{g} = \omega_1 + \omega_2 - g.
		\end{equation}
		As it was observed in \cite{DKKSS}, in this case Hamiltonians are symmetric with respect to the scalar product with Sklyanin measure $\Delta(\bm{x})$, which does not depend on the coupling constant, see \eqref{Skl}. In this regime coupling constant symmetry of the wave functions~\rf{refl} proved in \cite{BDKK4} plays the crucial role. With its help we establish a simple connection between spectral transform with respect to bilinear form and with respect to scalar product with Sklyanin measure, see~\rf{f55}. Using this relation we again derive the inversion formula and prove unitary isomorphism of corresponding $L_2$ spaces (Theorem~\ref{theoremf3}). 
		
		We note in passing that restriction of the type~\eqref{g-restr} and Sklyanin measure $\Delta(\bm{x})$ appear in the modular $XXZ$ and $q$-Toda chains~\cite{KLS},~\cite{DKM1},~\cite{SchSh}, as well as in Liouville field theory, where the kernel of $S$-move~\cite[Section~4.4]{T},~\cite[Appendix D]{TV} coincides with the Halln\"as--Ruijsenaars wave functions in the case of two particles. 
			
		To our knowledge unitarity regimes of Ruijsenaars model with complex conjugated periods $\bar{\omega}_1 = \omega_2$ were not considered before. However, such regime has been considered in the context of modular double of quantum group \cite{F1, F2, KLS, BT}. Besides, the strong clue for it is given by the recent work \cite{SS}, where \mbox{$\omega_1/\omega_2 \to -1$} limit of the Ruijsenaars hyperbolic system is considered. Namely, in this limit the periods become pure imaginary and complex conjugated.
		
		Note that in the ``nonrelativistic'' limit ($g = \beta \omega_1$, $\omega_1 \to 0$), which corresponds to hyperbolic Calogero--Sutherland model, the inversion formula as well as the Plancherel theorem are known in a stronger version, see \cite{O}. 
			
	\section{Notations} 
		The Ruijsenaars hyperbolic system depends on three parameters: periods $\bo=(\o_1,\o_2)$ and coupling constant $g$. All these parameters are assumed to be complex valued with restrictions on the periods
		\begin{align}\label{o-cond}
			\Re \omega_1 >0, \qquad  \Re \omega_2 > 0
		\end{align}
		and on the coupling constant
		\begin{align}\label{g-cond}
			0 < \Re g < \Re (\omega_1 + \omega_2), \qquad 0 < \Re \frac{g}{\omega_1 \omega_2} < \Re \frac{\omega_1 + \omega_2}{\omega_1 \omega_2}.
		\end{align}
	All further results are symmetric with respect to interchange of the periods. For definiteness, we assume that 
	\begin{equation}\label{o-ord}
	 	\Re \o_1\leq \Re \o_2.
	 \end{equation} 
	Denote also the reflected coupling constant
	 \begin{align}
	 	g^* = \omega_1 + \omega_2 - g.
	 \end{align} 
	The measure function $\mu(x)$ and the kernel function $K(x)$ 
	\begin{align}
		& \mu(x) \equiv \mu(x; g | \bm{\omega}) = S_2(\imath x | \bm{\omega}) \, S^{-1}_2(\imath x + g| \bm{\omega}), \\[6pt]
		&  K(x) \equiv K(x; g| \bm{\omega}) = S_2^{-1} \biggl(\imath x+\frac{g^*}{2} \bigg| \bo \! \biggr) \, S_2^{-1}\biggl(- \imath x+\frac{g^*}{2} \bigg| \bo \! \biggr)
	\end{align}
	constitute the main building blocks of wave functions. Here $S_2(z | \bo)$ is the double sine function, whose properties are collected in Appendix \ref{AppendixA}. In particular, from reflection formula for the double sine function \eqref{S2-refl} we have the relation
	\begin{align}
		K\biggl(x + \frac{\imath g^*}{2} \biggr) = \mu^{-1}(x).
	\end{align}
	Denote also the measure function for $n$ variables
	\begin{align}\label{f00}
		\mu(\bm{x}) = \frac{1}{n!}\prod_{\substack{j, k = 1 \\ j \not=k} }^n \mu(x_j - x_k),\qquad \bx=(x_1,\ldots,x_n).
	\end{align}
	It admits two natural factorizations. First, it can be factorized into two ``halfs''
	\begin{align}\label{mu-half}
		\mu(\bm{x}) = \mu'(\bm{x}) \, \mu'(-\bm{x}), \qquad \mu'(\bm{x}) = \frac{1}{\sqrt{n!}} \prod_{1\leq j < k \leq n} \mu(x_j - x_k).
	\end{align}
	Second, we can separate its coupling dependent part 
	\begin{align}\label{eta}
		\mu(\bm{x}) = \Delta(\bm{x}) \, \eta(\bm{x}), \qquad \eta(\bm{x}) = \prod_{\substack{j, k =1 \\ j \not= k}}^n S_2^{-1}( \imath x_j- \imath x_k + g | \bm{\omega}).
	\end{align}
	The remaining part has the form
	\begin{align}\label{Skl}
		\Delta(\bm{x}) = \frac{1}{n!}  \prod_{\substack{j, k =1 \\ j \not= k}}^n S_2( \imath x_j- \imath x_k | \bm{\omega}) = \frac{1}{n!} \prod_{1 \leq j < k \leq n} 4 \sh \frac{\pi (x_j - x_k)}{\omega_1} \, \sh \frac{\pi (x_j - x_k)}{\omega_2},
	\end{align}
	where the second expression follows from reflection formula~\eqref{S2-sin}.
	Note also that due to~\eqref{S2-refl}
	\begin{equation}\label{eta-inv}
		\eta^{-1}(\bm{x}; g) = \eta(\bm{x}; g^*).
	\end{equation}
	For any of the above functions of spatial variables $f(\bx) \equiv f(\bx; g | \bo)$ we define its counterpart depending on spectral variables $\hat{f}(\bl)$ by the formula
	\begin{align}\label{f-hat}
		\hat{f}(\bl) = f(\bl; \hat{g}^*| \hat{\bo}),
	\end{align}
	where we introduced the \textit{dual} parameters
\begin{align*}
	\hat{\bm{\omega}} = \bigl(\omega_2^{-1}, \omega_1^{-1}\bigr), \qquad \hat{g} = \frac{g}{\omega_1 \omega_2}, \qquad \hat{g}^* = \frac{g^*}{\omega_1 \omega_2}.
\end{align*}
For instance,
\begin{align}\label{f0b}
	\hat{\mu}(\l) \equiv \mu(\l; \hat{g}^\ast | \bm{\hat{\omega}}) = S_2(\imath \l | \hat{\bm{\omega}}) \, S^{-1}_2(\imath \l + \hat{g}^\ast| \hat{\bm{\omega}}).
\end{align} 
Note that by assumptions \eqref{g-cond}
\begin{align}
	\Re \hat{g} >0, \qquad \Re \hat{g}^* > 0.
\end{align}
Finally, for a function $w(\bm{s})$ denote the corresponding bilinear form and scalar product (the latter for real nonnegative $w(\bm{s})$)
\begin{align}
	& \bigl(\phi_1(\bm{s}) , \phi_2(\bm{s}) \bigr)_w = \int_{\mathbb{R}^n} d\bm{s} \; w(\bm{s}) \, \phi_1(-\bm{s}) \, \phi_2(\bm{s}), \\[6pt]
	& \bigl\langle \phi_1(\bm{s}) , \phi_2(\bm{s}) \bigr\rangle_w = \int_{\mathbb{R}^n} d\bm{s} \; w(\bm{s}) \, \overline{\phi_1(\bm{s})} \, \phi_2(\bm{s}).
\end{align}

\section{Hamiltonians and wave functions}

\subsection{Symmetries} 

Hamiltonians of Ruijsenaars hyperbolic system have the form
\begin{align}\label{Hh}
	H_s= \sum_{ \substack{J \subset \{1, \dots, n\} \\ |J| = s}} \, \prod_{\substack{j \in J \\ k\not\in J}} \frac{\sh \frac{\pi}{\omega_2} ( x_j - x_k - \imath g)}{ \sh \frac{\pi}{\omega_2} (x_j - x_k) } \, \prod_{j \in J} e^{- \imath \omega_1 \partial_{x_j}},
\end{align}
where $s = 1, \dots, n$. In particular,
\begin{align}
	H_n = e^{- \imath \omega_1 ( \partial_{x_1}+ \ldots + \partial_{x_n})}.
\end{align}
These Hamiltonians commute with each other (see \cite{HR1} and references therein)
\begin{align}
	[H_s, H_r] = 0.
\end{align}
It is convenient to collect them into the generating function
\begin{equation}\label{f1}
	H(\lambda) = e^{- \pi n \omega_1 \lambda} \, H_n^{-1/2} \sum_{s = 0}^n \bigl( e^{2\pi \omega_1 \lambda} \bigr)^{n - s} \, H_s,
\end{equation}
where we additionally denote $H_0 = 1$.

In \cite{HR1} Halln\"as and Ruijsenaars constructed eigenfunctions of the above Hamiltonians. Denote the tuples of $n$ and $n - 1$ variables as 
\begin{align}
	\bl = (\l_1, \dots, \l_n), \qquad \bl' = (\l_1, \dots, \l_{n - 1}).
\end{align}
Then Halln\"as--Ruijsenaars integral representation of the eigenfunctions is given by the following formula
\begin{align}\label{Psi}
	\begin{aligned}
		\Psi_{\bm{\lambda}}(\bm{x}) = d_{n - 1} \int_{\mathbb{R}^{n - 1}} &d\bm{y}' \; \mu(\bm{y}') \; \prod_{j = 1}^n \prod_{k = 1}^{n - 1} K(x_j - y_k)  \\[6pt]
		&\times e^{ 2\pi \imath \, \lambda_n \bigl[(x_1 + \ldots + x_n) - (y_1 + \ldots + y_{n - 1}) \bigr]} \, \Psi_{\bm{\lambda}'}(\bm{y}'),
	\end{aligned}
\end{align}
that is the eigenfunction for $n$ particles is expressed in terms of the eigenfunction for $n - 1$ particles.
The one particle eigenfunction is the plane wave
\begin{align}
	\Psi_{\lambda_1}(x_1) = e^{2\pi \imath \lambda_1 x_1},
\end{align}
and for future convenience we choose normalization constant
\begin{align}\label{dn}
	d_{n - 1} = \bigl[ \sqrt{\omega_1 \omega_2} \, S_2(g| \bm{\omega}) \bigr]^{1 - n}.
\end{align}
The above integral representation is absolutely convergent under restrictions on the parameters \eqref{o-cond}, \eqref{g-cond}, see Proposition \ref{prop:Psi-bound} below.

The wave functions $\Psi_{\bl}(\bx) \equiv \Psi_{\bl}(\bx; g | \bo)$ have many symmetries, some of which easily follow from the integral representation \eqref{Psi}. For example, these functions are clearly symmetric with respect to permutations of spatial variables
\begin{align}
	\bx = (x_1, \dots, x_n) \;\, \mapsto \;\, \sigma(\bx) = (x_{\sigma(1)}, \dots, x_{\sigma(n)}), \qquad \sigma \in S_n.
\end{align}
Besides, Halln\"as--Ruijsenaars representation has some nontrivial symmetries proved in \cite{BDKK2, BDKK4}. Below we list all known symmetries and cite the corresponding statements for the nontrivial ones.
\bigskip

\noindent \textit{1. Bispectral duality \hfill  \cite[Theorem 5]{BDKK2}}
\begin{align}\label{bisp}
	\Psi_{\bm{\lambda}}(\bm{x}; g| \bm{\omega}) = \Psi_{\bm{x}}(\bm{\lambda}; \hat{g}^*| \hat{\bm{\omega}}).
\end{align}

\noindent \textit{2. Symmetry over spatial and spectral variables} 
	\begin{align}
		\Psi_{\bm{\lambda}}(\bm{x}) = \Psi_{\bm{\lambda}}(\sigma(\bm{x})) = \Psi_{\sigma(\bm{\lambda})}(\bm{x}), \qquad \sigma \in S_n.
	\end{align}
	
\noindent \textit{3. Coupling reflection symmetry \hfill  \cite[Theorem 5]{BDKK4}}
\begin{align}\label{refl}
	\Psi_{\bm{\lambda}}(\bm{x};g^*) = \eta(\bm{x}) \, \hat{\eta}(\bm{\lambda}) \, \Psi_{\bm{\lambda}}(\bm{x}; g).
\end{align}

\noindent \textit{4. Modular symmetry}
	\begin{align}
		\Psi_{\bl}(\bx| \omega_1, \omega_2) = \Psi_{\bl}(\bx| \omega_2, \omega_1).
	\end{align}
	
\noindent \textit{5. Parity}  
\begin{align}\label{xl-refl}
	\Psi_{\bm{\lambda}}(\bm{x}) = \Psi_{-\bm{\lambda}}(-\bm{x}).
\end{align}

\noindent \textit{6. Shift relation}
\begin{align}\label{l-shift}
	& \Psi_{\bm{\lambda} + \alpha\bm{e}}(\bm{x}) = e^{2\pi \imath \, \alpha (x_1 + \ldots + x_n)} \, \Psi_{\bm{\lambda}}(\bm{x}), \qquad \bm{e}= (1, \dots, 1).
\end{align}
\vspace{-3pt}

Now let us recall diagonalization property. The eigenvalues of Hamiltonians \eqref{Hh} are expressed in terms of elementary symmetric polynomials 
\begin{align}
	e_s(z_1, \dots, z_n) = \sum_{1 \leq i_1 < \ldots < i_s \leq n} z_{i_1} \cdots z_{i_s}.
\end{align}
Namely, under condition
\begin{align}\label{H-g}
	\Re g < \Re \omega_2
\end{align}
the wave functions satisfy difference equations
\begin{align}\label{HPsi}
	H_s \, \Psi_{\bm{\lambda}}(\bm{x}) = e_s \bigl( e^{ 2\pi \omega_1 \lambda_1}, \ldots, e^{ 2\pi \omega_1 \lambda_n} \bigr) \, \Psi_{\bm{\lambda}}(\bm{x}), \qquad s = 1,\dots n.
\end{align}
Equivalently, we can write these equations using the generating function \eqref{f1}
\begin{align} \label{Hl-Psi}
	H(\lambda) \, \Psi_{\bm{\lambda}}(\bm{x}) = \prod_{j = 1}^n 2\ch [\pi \omega_1 (\lambda - \lambda_j)]  \; \Psi_{\bm{\lambda}}(\bm{x}).
\end{align}
This result is proved in \cite[Proposition 6.2]{HR1} for real periods $\omega_i$, however, its extension to complex valued parameters is immediate, see \cite[Theorem 1]{BDKK2}.

The restriction \eqref{H-g} can be weakened in the following way. Using difference equations for the double sine function \eqref{S2-diff}, one can check that Hamiltonians with reflected coupling constant $g^* = \omega_1 + \omega_2 - g$ differ from the original ones \eqref{Hh} by similarity transformation
\begin{align}\label{f22}
	H(\l; g^*) = \eta(\bm{x}) \; H(\l; g) \; \eta^{-1}(\bm{x}),
\end{align} 
where $\eta(\bx)$ is defined by \eqref{eta}. Due to this relation and the symmetry of wave functions~\eqref{refl} the equation
\begin{align}\label{f21}
	\eta(\bm{x})  H(\l) \, \Psi_{\bm{\lambda}}(\bm{x}) = \prod_{j = 1}^n 2\ch [\pi \omega_1 (\lambda - \lambda_j)]  \; 
	\eta(\bm{x}) \Psi_{\bm{\lambda}}(\bm{x}) 
\end{align}
holds in the region 
\begin{align}\label{f21a}
	\Re g^*<\Re \o_2 \qquad \Leftrightarrow \qquad \Re g > \Re \omega_1.
\end{align}
Here $\eta(\bm{x})  H(\l)$ is regarded as a single operator acting on the wave function. In what follows we use \rf{f21} instead of \rf{Hl-Psi} in the region \rf{f21a}.

\begin{remark}
	Because of bispectral symmetry \eqref{bisp} the Hamiltonians \eqref{Hh} also have dual counterparts that act on the spectral variables $\bl$ and are diagonalized by the same wave functions.
\end{remark}

Finally, Hamiltonians \eqref{Hh} are symmetric with respect to the bilinear form 
\begin{align}\label{bil-pair}
	\bigl(\phi_1(\bx),\phi_2(\bx) \bigr)_{\mu}=\int_{\R^n}d\bx \; \mu(\bx) \, \phi_1(-\bx) \, \phi_2(\bx).
\end{align}
That is we have the equality 
\beq \label{f23}
\int_{\R^n} d\bx \; \mu(\bx) \, \vf_1(-\bx) \, [H(\l) \, \vf_2](\bx)=\int_{\R^n} d\bx \; \mu(\bx) \, [H(\l) \, \vf_1](-\bx) \, \vf_2(\bx)
\eeq
provided the functions $\vf_i(\bx)$ are analytic in the strips
\beq\label{f25}|\Im x_j|\leq  \frac{\Re \o_1}{2}, \qquad j = 1, \dots, n, \eeq
and decay in these strips fast enough, see Lemma~\ref{lemma:Hsym} in Appendix \ref{AppendixB}. 

Due to the relation \rf{f22} symmetry of the Hamiltonians with reflected coupling~$g^*$ can be rewritten in the form
 \beq \label{f24} 
 \bigl( \eta(\bx) \, \vf_1(\bx), H(\l) \, \vf_2(\bx) \bigr)_\Delta= \bigl(\eta(\bx) \, H(\l) \, \vf_1(\bx),\vf_2(\bx) \bigr)_\Delta
 \eeq
assuming analyticity and sufficient decay of the products $\eta(\bx) \, \phi_i(\bx)$ in the strips~\eqref{f25}. Indeed, consider the pairing
	\beq\label{f26} 
		\bigl(\eta(\bx) \, \vf_1(\bx), \, H(\l;g^\ast) \, \eta(\bx) \, \vf_2(\bx)\bigr)_{\mu^*} \, ,
	\eeq
where $\mu^*(\bx) \equiv \mu(\bx; g^*)$. With the help of relations~\eqref{eta-inv}, \eqref{f22} the equality
	\beq	
	\bigl(\eta(\bx) \, \vf_1(\bx), \, H(\l;g^\ast) \, \eta(\bx) \, \vf_2(\bx)\bigr)_{\mu^*} = \bigl( H(\l;g^\ast)\, \eta(\bx) \, \vf_1(\bx), \, \eta(\bx) \, \vf_2(\bx)\bigr)_{\mu^*}
	\eeq
reduces to the identity
\beq \label{f26a}
	\begin{split}
		& \int_{\R^n}d\bx \; \Delta(\bx) \, \eta(\bx) \, \vf_1(-\bx) \, [H(\l; g) \, \vf_2](\bx) \\[6pt]
		& \quad = \int_{\R^n}d\bx \;  \Delta(\bx) \, \eta(\bx) \, [H(\l; g) \, \vf_1 ](-\bx) \, \vf_2(\bx),
	\end{split}
\eeq
which is precisely \rf{f24}. Collecting all together, to the above list of six properties we can add the following two.
\bigskip

\noindent \textit{7. Eigenvalue equations \rf{Hl-Psi} and \rf{f21}.} 
 
\bigskip
\noindent \textit{8. Symmetry of Hamiltonians with respect to the bilinear pairings \rf{f23} and \rf{f26a}.}

\subsection{Bounds}

The bound on the wave functions derived in \cite[Corollary 6]{BDKK2} is important throughout this paper. The following proposition is its slight generalization.
\begin{proposition}\label{prop:Psi-bound}
	Let $\bm{\lambda} \in \mathbb{R}^n$ and $\bm{x} \in \mathbb{C}^n$ such that 
	\begin{align}\label{x-str}
		| \Im x_j | < \frac{\Re g^*}{2}, \qquad j = 1, \dots, n.
	\end{align}
	Then 	for arbitrary $\delta > 0$
	\begin{align}\label{f3}
		\begin{aligned}
			&| \Psi_{\bm{\lambda}}(\bm{x}) |  \leq C(g, \bm{\omega}) \\[3pt]
			& \quad  \times \exp  \biggl( -2\pi \lambda_n \sum_{j = 1}^n \Im x_j + \delta \sum_{j = 1}^n | \Re x_j | - \pi \Re \hat{g} \sum_{1 \leq j < k \leq n} | \Re (x_j - x_k)|  \biggr).
		\end{aligned}
	\end{align}
 
\end{proposition}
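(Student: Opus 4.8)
The plan is to prove \eqref{f3} by induction on the number of particles $n$, using the Halln\"as--Ruijsenaars recursion \eqref{Psi}. The base case $n=1$ is immediate: from $\Psi_{\lambda_1}(x_1)=e^{2\pi\imath\lambda_1 x_1}$ we get $|\Psi_{\lambda_1}(x_1)|=e^{-2\pi\lambda_1\Im x_1}$, which is exactly \eqref{f3} for $n=1$ (the product over $j<k$ is empty and the $\delta$-term only weakens the estimate). For the step from $n-1$ to $n$ I would take the modulus under the integral sign in \eqref{Psi}. Since $\by'$ runs over $\R^{n-1}$ we have $\Im y_k=0$, so the exponential factor contributes precisely $\bigl|e^{2\pi\imath\lambda_n[(x_1+\dots+x_n)-(y_1+\dots+y_{n-1})]}\bigr|=\exp\bigl(-2\pi\lambda_n\sum_{j=1}^n\Im x_j\bigr)$, which is the first term of \eqref{f3}. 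It then remains to bound
\[
I(\bx)=\int_{\R^{n-1}}d\by'\;|\mu(\by')|\prod_{j=1}^n\prod_{k=1}^{n-1}|K(x_j-y_k)|\;|\Psi_{\bl'}(\by')|,
\]
where for $|\Psi_{\bl'}(\by')|$ I insert the inductive estimate; its $\lambda$-dependent term drops out because $\Im y_k=0$, leaving only $\exp\bigl(\delta'\sum_k|y_k|-\pi\Re\hat{g}\sum_{k<l}|y_k-y_l|\bigr)$ with a free $\delta'>0$ at my disposal.

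The analytic input is the large-argument asymptotics of the double sine function $S_2(z\mid\bo)$ collected in Appendix~\ref{AppendixA}: along the rays $z=\pm\imath x+\mathrm{const}$ with $x\in\R$, $\log S_2$ is asymptotically a quadratic polynomial in $x$. The decisive point is that in both building blocks the leading quadratic parts cancel, between the two factors of $K(x)=S_2^{-1}(\imath x+g^*/2)\,S_2^{-1}(-\imath x+g^*/2)$ and between numerator and denominator of $\mu(x)=S_2(\imath x)\,S_2^{-1}(\imath x+g)$, so that only linear rates survive. A short computation with the Bernoulli polynomial $B_{2,2}$ then gives the kernel decay $|K(x)|\le C\,e^{-\pi\Re\hat{g}\,|\Re x|}$ and the opposite measure growth $|\mu(x)|\le C\,e^{+\pi\Re\hat{g}\,|\Re x|}$, both governed by $\hat{g}=g/(\omega_1\omega_2)$; note that the constant $g^*$ entering $K$ produces the rate $\hat{g}$ because $2\cdot\tfrac{g^*}{2}-(\omega_1+\omega_2)=-g$. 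The strip condition \eqref{x-str}, $|\Im x_j|<\Re g^*/2$, is exactly what guarantees that the arguments $\pm\imath(x_j-y_k)+g^*/2$ keep positive real part, so that $K$ is free of singularities on the contour and the asymptotics are uniform. Substituting $\mu(\by')$ from \eqref{f00} together with these one-variable estimates turns the bound on $I(\bx)$ into a purely real, piecewise-linear exponential estimate.

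The main obstacle is this last estimate: to show that the $(n-1)$-dimensional integral is bounded by $C\exp\bigl(\delta\sum_j|\Re x_j|-\pi\Re\hat{g}\sum_{j<k}|\Re(x_j-x_k)|\bigr)$. The integrand is a product of exponentials of piecewise-linear functions of $y_k$, of $\Re x_j-y_k$ and of $y_k-y_l$; although the measure $\mu(\by')$ grows in the differences $y_k-y_l$, this growth is overwhelmed by the $n$ decaying kernels attached to each $y_k$ and by the inductive decay of $\Psi_{\bl'}$, which is precisely what makes $I(\bx)$ absolutely convergent. To extract the pairwise decay in $\bx$ I would split $\R^{n-1}$ into the finitely many chambers fixed by the ordering of the $y_k$ relative to the real points $\Re x_1,\dots,\Re x_n$, estimate each chamber by the triangle inequality and the elementary one-dimensional bounds $\int_\R e^{-a|t-u|-b|t-v|}\,dt\le C\,e^{-\min(a,b)|u-v|}$ together with their iterates, and then sum the chamber contributions. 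The bookkeeping of these piecewise-linear rates, carried out uniformly in $\bx$, is the delicate part, and it is here that the arbitrary parameter $\delta>0$ is needed: the subleading linear terms in $|\Re x_j|$, as well as the polynomial and boundary corrections to the double sine asymptotics (including the $(v-u)e^{-a(v-u)}$ factors arising when two rates coincide), are all absorbed into the single term $\delta\sum_j|\Re x_j|$. This follows the proof of \cite[Corollary~6]{BDKK2}, of which the present proposition is the stated strip and $\delta$ refinement.
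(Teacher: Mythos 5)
Your proposal takes essentially the same route as the paper: the paper's proof consists precisely of invoking the one-variable estimates $|\mu(x)|\leq C e^{\pi \Re \hat{g}\,|\Re x|}$ and $|K(x)|\leq C e^{-\pi \Re \hat{g}\,|\Re x|}$ from \cite[Appendix A]{BDKK1}, observing that the strip condition \eqref{x-str} keeps the poles of $K(x_j-y_k)$ away from the contour $y_k\in\R$, and deferring the remaining work to the inductive argument of \cite[Corollary 6]{BDKK2} over the Halln\"as--Ruijsenaars recursion \eqref{Psi} --- which is exactly the induction you spell out. Your more detailed sketch of the core piecewise-linear integral estimate (chamber decomposition, one-dimensional convolution bounds, with the arbitrary $\delta>0$ absorbing the losses from coinciding rates and subleading corrections) is a faithful rendering of what those references carry out, so there is no substantive difference in approach.
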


\begin{remark}
	Since $\Psi_{\bl}(\bm{x})$ is symmetric in spectral parameters $\bl$, the variable~$\l_n$ in the second line could be replaced by $\min_k\l_k$.
\end{remark}
	
To derive the bound \eqref{f3} from the Halln\"as--Ruijsenaars representation \eqref{Psi} one needs estimates proved in \cite[Appendix A]{BDKK1}. Namely, 
\begin{align}\label{mu-bound}
	& | \mu(x) | \leq C(b, g, \bm{\omega}) \, e^{ \pi \Re \hat{g} \, | \Re x| }, \\[6pt] \label{K-bound}
	& | K(x) | \leq  C(b, g, \bm{\omega}) \, e^{ - \pi \Re \hat{g} \, | \Re x| }
\end{align}
for $x \in \mathbb{C}$ in any closed subset of analyticity domains of these functions and such that $| \Im x | < b$. The rest of arguments are the same, as in the proof of \cite[Corollary~6]{BDKK2}, where only the case $x_j \in \mathbb{R}$ is considered. Note that for $x_j$ from the strips \eqref{x-str} poles of the kernel function $K(x_j - y_k)$ in the integrand \eqref{Psi}
\begin{align}
	y_k = x_j \pm \imath \biggl( \frac{ g^*}{2} +  m_1 \omega_1 +  m_2  \omega_2 \biggr), \qquad m_i \in \mathbb{N}_0
\end{align}
are separated by integration contour $y_k \in \mathbb{R}$.  

\begin{corollary} \label{cor:Psi-an}
	Let $\bm{\lambda} \in \mathbb{R}^n$. The eigenfunction $\Psi_{\bm{\lambda}}(\bm{x})$ is analytic in $x_j \in \mathbb{C}$ in the strips
	\begin{align}\label{x-str2}
		| \Im x_j | < \frac{\Re g^*}{2}, \qquad j = 1, \dots, n.
	\end{align}
\end{corollary}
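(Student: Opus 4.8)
The plan is to read analyticity directly off the Halln\"as--Ruijsenaars representation \eqref{Psi}, using Proposition \ref{prop:Psi-bound} to control convergence and the separation of poles noted just before the statement to keep the integrand regular. I would argue by induction on the number of particles $n$. The base case $n = 1$ is the plane wave $\Psi_{\lambda_1}(x_1) = e^{2\pi\imath\lambda_1 x_1}$, which is entire, hence in particular analytic in the strip.

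For the inductive step, fix $\bm{\lambda}\in\mathbb{R}^n$ and freeze all spatial variables except one, say $x_j$, together with the integration variables $\bm{y}'\in\mathbb{R}^{n-1}$. In the integrand of \eqref{Psi} the only $x_j$-dependent factors are the exponential, which is entire in $x_j$, and the kernel product $\prod_{k=1}^{n-1} K(x_j - y_k)$. As recalled above, the poles of $K(x_j - y_k)$ sit at $x_j = y_k \pm\imath\bigl(g^*/2 + m_1\omega_1 + m_2\omega_2\bigr)$ with $m_i\in\mathbb{N}_0$, whose imaginary parts equal $\pm\bigl(\Re g^*/2 + m_1\Re\omega_1 + m_2\Re\omega_2\bigr)$. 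Since $\Re\omega_i > 0$, all of these have modulus at least $\Re g^*/2$, so for $x_j$ strictly inside the strip \eqref{x-str2} and every real $y_k$ the integrand is holomorphic in $x_j$; no pole ever crosses the contour $y_k\in\mathbb{R}$.

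It then remains to transfer this analyticity through the $\bm{y}'$-integration. Here I would invoke the bound \eqref{f3} of Proposition \ref{prop:Psi-bound} (with its inputs \eqref{mu-bound}, \eqref{K-bound} and the inductive bound on $\Psi_{\bm{\lambda}'}(\bm{y}')$ for real $\bm{y}'$), which shows that \eqref{Psi} converges absolutely and locally uniformly in $\bm{x}$ on compact subsets of the strip. Concretely, I would apply Morera's theorem: for any closed triangle $\gamma$ in the $x_j$-strip, interchange $\oint_\gamma dx_j$ with $\int_{\mathbb{R}^{n-1}} d\bm{y}'$ by Fubini (justified by the absolute convergence), note that $\oint_\gamma$ of the integrand vanishes by Cauchy's theorem since it is holomorphic in $x_j$, and conclude $\oint_\gamma \Psi_{\bm{\lambda}}\,dx_j = 0$. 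Equivalently, the local uniform convergence exhibits $\Psi_{\bm{\lambda}}$ as a locally uniform limit of holomorphic functions. Separate analyticity in each $x_j$ together with local boundedness (again from \eqref{f3}) then yields joint analyticity in $\bm{x}$ by the Osgood/Hartogs theorem.

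The main obstacle I anticipate is the uniform control needed to justify the Fubini interchange and the local uniform convergence up to --- but not including --- the boundary of the strip: the bound \eqref{f3} is stated for $x_j$ strictly inside \eqref{x-str}, so one must restrict to compact subsets where the constants in \eqref{mu-bound}, \eqref{K-bound} are uniform and the exponential decay in $\bm{y}'$ survives. This is precisely the estimate underlying Proposition \ref{prop:Psi-bound}, so once that proposition is granted the corollary follows with little additional work.
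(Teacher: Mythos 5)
Your proposal is correct and takes essentially the same route as the paper: the paper's proof likewise reads analyticity off the representation \eqref{Psi}, citing Proposition \ref{prop:Psi-bound} for uniform convergence of the integral when $x_j$ ranges over compact subsets of the strips \eqref{x-str2}, with the pole separation noted just before the corollary keeping the integrand regular. Your additional scaffolding (induction on $n$, Fubini/Morera, Osgood--Hartogs for joint analyticity) only spells out the standard details the paper leaves implicit.
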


\begin{proof}
	Due to Proposition \ref{prop:Psi-bound} the integral representation \eqref{Psi} converges uniformly for $x_j$ varying over compact subsets of the strips \eqref{x-str2}. Therefore, it is analytic in $x_j$.
\end{proof}

\begin{remark}
	Domain of analyticity can be easily continued to $\bm{x} \in \mathbb{C}^n$ satisfying
	\begin{align}
		| \Im (x_j - x_k) | < \Re g^*, \qquad j,k = 1, \dots, n,
	\end{align}
	although we won't need it below. To justify this one needs to simultaneously shift straight integration contours for all~$y_k$, see~\cite[Proposition 6.1]{HR1}.
\end{remark}

Let us state two more inequalities. For the half of the measure \eqref{mu-half} from the bound~\eqref{mu-bound} we get
\begin{align}\label{f10}
	| \mu'(\pm\bm{x}) | \leq C(g, \bm{\omega}) \, \exp \biggl( \pi \Re \hat{g} \sum_{1 \leq j < k \leq n} | \Re (x_j - x_k)|\biggr)
\end{align}
provided that $\bx \in \mathbb{C}^n$ is in closed subset of analyticity domain and $\Im \bx$ is bounded. Note also that
\begin{align}
	\eta(\bx) = \prod_{1 \leq j < k \leq n} K(x_j - x_k; \, g^*-g | \bo).
\end{align}
Hence, from the bound \eqref{K-bound} we have
\begin{align} \label{eta-bound}
	| \eta(\bx) | \leq C(g, \bo) \, \exp  \biggl( \pi \Re (\hat{g} - \hat{g}^*) \sum_{1 \leq j < k \leq n} | \Re (x_j - x_k)|\biggr)
\end{align}
under the same assumptions on $\bx$. 

Combining inequality \eqref{f10} with the Proposition~\ref{prop:Psi-bound} and bispectral duality \eqref{bisp} we obtain the following useful estimates.

\begin{corollary}\label{cor:Psi-bound} Let $\bx, \bm{\lambda} \in \mathbb{R}^n$. Then for arbitrary $\delta > 0$
		\begin{align} \label{f11}
			& \bigl| \mu'(\pm\bm{x}) \, \Psi_{\bm{\lambda}}(\bm{x}) \bigr|  \leq C(g, \bm{\omega}) \, \exp  \biggl( \delta \sum_{j = 1}^n | x_j |   \biggr), \\[4pt]
		\label{f11a}
			& \bigl| \hat{\mu}'(\pm\bm{\l}) \, \Psi_{\bm{\lambda}}(\bm{x}) \bigr|  \leq C(g, \bm{\omega}) \, \exp  \biggl( \delta \sum_{j = 1}^n | \l_j |   \biggr).
		\end{align}
\end{corollary}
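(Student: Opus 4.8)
The plan is to combine the three bounds already at our disposal: the wave-function estimate \eqref{f3} of Proposition~\ref{prop:Psi-bound}, the half-measure estimate \eqref{f10}, and the bispectral duality \eqref{bisp}. For the first inequality \eqref{f11} I would start directly from the product $\mu'(\pm\bx)\,\Psi_{\bl}(\bx)$ with $\bx,\bl\in\R^n$. Since now $\Im x_j = 0$, the exponential factor $\exp(-2\pi\lambda_n\sum_j\Im x_j)$ in \eqref{f3} becomes trivial, and the dangerous $\lambda_n$-dependence disappears; what remains from \eqref{f3} is the factor $\exp(\delta\sum_j|\Re x_j| - \pi\Re\hat g\sum_{j<k}|\Re(x_j-x_k)|)$. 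Multiplying by the bound \eqref{f10} for $|\mu'(\pm\bx)|$, which contributes precisely $\exp(+\pi\Re\hat g\sum_{j<k}|\Re(x_j-x_k)|)$, the two $\hat g$-terms cancel \emph{exactly}. I expect to be left with $C(g,\bo)\exp(\delta\sum_j|x_j|)$, which is \eqref{f11} (writing $|x_j|$ for $|\Re x_j|$ since $x_j$ is real). Here one should note that $\delta>0$ is arbitrary in \eqref{f3}, so the same arbitrary $\delta$ propagates to the statement.

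For the second inequality \eqref{f11a} the idea is to transfer the problem to the spectral side via bispectral duality \eqref{bisp}, which gives $\Psi_{\bl}(\bx;g|\bo)=\Psi_{\bx}(\bl;\hat g^\ast|\hat\bo)$. Under this identification the roles of $(\bx,g,\bo)$ and $(\bl,\hat g^\ast,\hat\bo)$ are swapped, and the factor $\hat\mu'(\pm\bl)$ is exactly the half-measure $\mu'$ built from the dual data. Thus $\hat\mu'(\pm\bl)\,\Psi_{\bl}(\bx)$ becomes, after dualization, an expression of the same shape as the left-hand side of \eqref{f11} but with all parameters replaced by their hats. Applying the already-proved inequality \eqref{f11} in the dual parameters—using that \eqref{f3} and \eqref{f10} are stated for the general system and hence hold verbatim for $\hat\bo,\hat g^\ast$—yields the bound $C(g,\bo)\exp(\delta\sum_j|\lambda_j|)$. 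The only bookkeeping point is to check that the dual coupling satisfies the same sign conditions, which is guaranteed by the restrictions \eqref{g-cond} giving $\Re\hat g>0$ and $\Re\hat g^\ast>0$.

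The main obstacle, such as it is, is the exact cancellation of the $\hat g$-exponents in the first step: one must verify that the coefficient $\pi\Re\hat g$ appearing with a minus sign in \eqref{f3} matches the $\pi\Re\hat g$ appearing with a plus sign in \eqref{f10}, so that no residual growth of the form $\exp(c\sum_{j<k}|x_j-x_k|)$ survives. A secondary subtlety is making sure that for the dual estimate the hypotheses of Proposition~\ref{prop:Psi-bound} are genuinely met by the dual system; this requires the constraints \eqref{g-cond}, which were imposed precisely to make the parameters and their duals simultaneously admissible. Once these two points are checked, both \eqref{f11} and \eqref{f11a} follow from the quoted bounds with no further analytic work.
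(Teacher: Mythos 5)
Your proposal is correct and is precisely the paper's argument: the paper derives Corollary~\ref{cor:Psi-bound} by combining the half-measure bound \eqref{f10} with Proposition~\ref{prop:Psi-bound} (where the $\lambda_n$-dependent factor trivializes for real $\bx$, and the $\pi\Re\hat g$ exponents cancel) and then transferring to \eqref{f11a} via bispectral duality \eqref{bisp}. Your additional checks—that the dual parameters satisfy the admissibility conditions \eqref{g-cond} and that the cancellation persists in the dual system—are exactly the bookkeeping the paper leaves implicit.
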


\section{Spectral transform}
For the functions $\phi(\bx)$ and $\chi(\bl)$ of spatial and spectral variables correspondingly define the integral operators~$T$ and $T^\dagger$ 
\begin{align}\label{T-def}
	& [\T \phi] (\bm{\lambda}) = \int_{\mathbb{R}^n} d\bm{x} \; \mu(\bm{x}) \, \Psi_{\bm{\lambda}}(-\bm{x}) \, \phi(\bm{x}),\\[4pt] \label{f6}
	& [\TT \chi] (\bm{x}) = \int_{\mathbb{R}^n} d\bm{\lambda} \; \hat{\mu}(\bm{\lambda}) \, \Psi_{\bm{\lambda}}(\bm{x}) \, \chi(\bm{\lambda}),
\end{align} 
once the integrals on the right hand sides converge. Equivalently, they can be written as the bilinear pairings
\begin{align}\label{f27} 
	[\T \phi] (\bm{\lambda}) = \bigl(\Psi_{\bm{\lambda}}(\bm{x}), \phi(\bx) \bigr)_{\mu}, \qquad [\TT \chi] (\bm{x}) = \bigl(\Psi_{-\bm{\lambda}}(\bm{x}), \chi(\bl) \bigr)_{\hat{\mu}}.
\end{align}
The defined operators are formally adjoint to each other modulo reflection of variables
\begin{align}\label{TTadj}
	\bigl(  \chi(\bl), \, [T\phi](\bl) \bigr)_{\hat{\mu}}=\bigl( [\TT \chi](-\bx), \, \phi(\bx) \bigr)_{\mu}.
\end{align} 
In the next section we prove that for the suitable space of functions~$\phi(\bx)$ the operator~$T^\dagger$ is left inverse of $T$, see Theorem \ref{theoremf1}. For this we need the composition $T^\dagger T$ to be well defined on those functions $\phi(\bx)$. This will be  justified further in this section, see Corollary \ref{corollaryf3}, which follows from the bound on the image $[T \phi](\bl)$ given by Proposition \ref{propositionf2}.

The bound on the wave function $\Psi_{\bl}(\bx)$~\eqref{f11} implies that the operator $\T$ is well defined on integrable functions $\phi(\bx)$ such that
\begin{align} \label{muh-phi}
	|\mu'(\bx) \, \vf(\bx)| \leq C \exp \biggl(-\ve\sum_{j = 1}^n |x_j|\biggr), \qquad \bx \in \R^n,
\end{align}
with some $\ve > 0$. Due to the estimate for $\mu'(\bx)$~\eqref{f10} the sufficient condition is the bound
 \beq \label{f9a} 	| \phi(\bm{x}) | \leq C \exp \biggl(- \pi \Re \hat{g} \sum_{1 \leq j < k \leq n} |  x_j - x_k|-\ve\sum_{j = 1}^n |x_j|\biggr), \qquad \bx \in \R^n.
 \eeq 
Analogously, the map $\TT$ is well defined on integrable functions $\chi(\bl)$ such that
 \beq \label{f9b} |\hat{\mu}'(\bl) \, \chi(\bl)| \leq C \exp \biggl(-\ve\sum_{j = 1}^n |\l_j|\biggr), \qquad \bl \in \R^n,
 \eeq 
 with some $\ve > 0$.

Now we introduce the space of functions $\S_{\bo,g}$, where as before parameters $\bo, g$ are subjected to conditions~\eqref{o-cond}, \eqref{g-cond}, \eqref{o-ord}. The space $\S_{\bo,g}$ consists of complex valued symmetric functions $\phi(\bx)$ analytic in the strips
\begin{align} \label{f7}
	| \Im x_j | \leq  \frac{\Re {\omega}_1}{2}= \min\left( \frac{\Re \o_1}{2},\,  \frac{\Re \o_2}{2}\right), \qquad j = 1, \dots, n,
\end{align}
and admitting the bound 
\beq \label{f9} 	| \phi(\bm{x}) | \leq C 	 \exp \biggl(- 2\pi \Re \hat{g} \sum_{1 \leq j < k \leq n} | \Re (x_j - x_k)|-\ve\sum_{j = 1}^n |\Re x_j|\biggr)
\eeq 
with some $\ve>0$ in these strips. 

In particular, for such functions due to~\eqref{f10} we have inequality with the full measure function
\beq\label{f8} 
|\mu(\bx)\, \vf(\bx)| \leq C \exp \biggl(-\ve\sum_{j = 1}^n |x_j|\biggr), \qquad \bx \in \R^n.
\eeq
Notice that the last bound \eqref{f8} is stronger than \rf{muh-phi}. Thus, the operator~$\T$ is well defined on the space $\S_{\bo,g}$ and its action is given by absolutely convergent integral. Moreover, due to analiticity in the strips \rf{f7} action of the Hamiltonians' generating functions~$H(\l)$~\eqref{f1} is also well defined on the space $\SC_{\bo,g}$.

To estimate the image $[T\phi](\bl)$ of a function $\phi(\bx) \in \SC_{\bo, g}$ we use Hamiltonians' generating function $H(\l)$ and its symmetry with respect to the bilinear form~\eqref{bil-pair}. For this we need the following lemma.

\begin{lemma}\label{lemmaf1} 
	Let $\phi(\bx) \in \SC_{\bo,g}$. Then the function $\psi(\bx)=H(\l)\, \phi(\bx)$ belongs to the region of definition of the operator $\T$, that is the corresponding integral absolutely converges. Moreover, its image $[\T\psi](\bl)$ admits the bound
	\begin{align}\label{f12}
		\bigl| \hat{\mu}'(\bl) \, [\T\psi](\bl) \bigr| \leq C(g,\bo,\phi) \, \exp\biggl(\delta \sum_{j = 1}^n |\l_j| \biggr), \qquad \bl \in \R^n,
	\end{align}
	with arbitrary small $\delta > 0$. 
\end{lemma}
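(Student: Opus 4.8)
The plan is to use the eigenvalue equation for the generating function $H(\l)$ to transfer the difference operator from the argument $\psi = H(\l)\phi$ onto the wave function inside the bilinear pairing, thereby producing an explicit polynomial factor in $\cosh$ that grows in $\bl$ and can be absorbed into the desired exponential bound. First I would verify that $\psi(\bx) = [H(\l)\phi](\bx)$ again lies in (or close enough to) the space $\SC_{\bo,g}$: since $\phi$ is analytic in the strips \eqref{f7} of half-width $\Re\omega_1/2$ and $H(\l)$ shifts arguments by $-\imath\omega_1$ in the $e^{-\imath\omega_1\partial_{x_j}}$ factors, the shifted function remains analytic and still satisfies a bound of the form \eqref{f8} (the $\sh$-coefficients are bounded on the relevant strips, possibly after a harmless enlargement of the constant and of $\delta$). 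This already gives absolute convergence of $[\T\psi](\bl)$ via \eqref{f8}, proving the first assertion.

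For the bound \eqref{f12} the key step is to write, using \eqref{f27},
\beqq
[\T\psi](\bl) = \bigl(\Psi_{\bl}(\bx),\, [H(\l)\phi](\bx)\bigr)_\mu,
\eeqq
and then invoke symmetry of $H(\l)$ with respect to the bilinear form, equation \eqref{f23}, to move $H(\l)$ onto the wave function:
\beqq
[\T\psi](\bl) = \bigl([H(\l)\Psi_{\bl}](\bx),\, \phi(\bx)\bigr)_\mu.
\eeqq
Now the eigenvalue equation \eqref{Hl-Psi} (or its reflected version \eqref{f21} in the region \eqref{f21a}) replaces $H(\l)\Psi_{\bl}$ by the scalar $\prod_{j=1}^n 2\ch[\pi\omega_1(\lambda-\lambda_j)]\,\Psi_{\bl}$, giving
\beqq
[\T\psi](\bl) = \prod_{j=1}^n 2\ch[\pi\omega_1(\lambda-\lambda_j)] \; [\T\phi](\bl).
\eeqq
It then remains to bound $\hat\mu'(\bl)[\T\phi](\bl)$ by a constant times $\exp(\delta\sum|\lambda_j|)$ and to control the $\cosh$ prefactor. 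The factor $\hat\mu'(\bl)[\T\phi](\bl)$ is estimated by pulling $\hat\mu'(\bl)$ under the integral \eqref{T-def} and using the bispectral bound \eqref{f11a}, $|\hat\mu'(\bl)\Psi_{\bl}(\bx)|\le C\exp(\delta\sum|\lambda_j|)$, together with \eqref{f8} for $\mu(\bx)\phi(\bx)$; the $\bx$-integral then converges to a constant while the $\bl$-dependence is exactly the desired exponential. The product of $\cosh$ factors grows at most like $\exp(\pi\Re\omega_1 \sum_j(|\lambda|+|\lambda_j|))$, which with $\lambda = \min_k\lambda_k$ (or any fixed component) still fits inside an $\exp(\delta'\sum|\lambda_j|)$ envelope once one absorbs it into the freely-chosen $\delta$.

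The main obstacle, and the point needing genuine care, is the application of the symmetry \eqref{f23}: that identity was stated for test functions $\vf_i$ analytic and decaying in the strips of half-width $\Re\omega_1/2$, and here one of the two slots is occupied by $\Psi_{\bl}(\bx)$, whose analyticity strip \eqref{x-str2} has half-width $\Re g^*/2$ rather than $\Re\omega_1/2$. One must check that $\Psi_{\bl}$ admits the required analyticity and decay to legitimately integrate by parts (shift contours) in the derivation of \eqref{f23}; when $\Re g^* < \Re\omega_1$ this forces the use of the reflected equation \eqref{f21} and the pairing \eqref{f26a} with the Sklyanin measure instead, where the extra factor $\eta(\bx)$ must be tracked. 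A secondary technical point is confirming that shifting the $\bx$-contour to pass the poles of the kernel introduced by $H(\l)$ produces no residue contributions, which follows from the pole-separation remark after \eqref{K-bound}. Once the symmetry step is justified in both regions of \eqref{f21a}, the rest is the bookkeeping of exponential rates described above.
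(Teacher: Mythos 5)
Your proposal inverts the paper's logic, and the step where you try to undo the inversion is false. Having derived (via the symmetry \eqref{f23} and the eigenvalue equation \eqref{Hl-Psi}) the identity $[\T\psi](\bl)=\prod_{j=1}^n 2\ch[\pi\omega_1(\l-\l_j)]\,[\T\phi](\bl)$, together with the estimate $|\hat{\mu}'(\bl)\,[\T\phi](\bl)|\leq C e^{\delta\sum_j|\l_j|}$ (which is correct and obtained exactly as in the paper), you claim that the cosh prefactor ``still fits inside an $\exp(\delta'\sum_j|\l_j|)$ envelope once one absorbs it into the freely-chosen $\delta$''. It does not: for fixed $\l$ the product $\prod_j\bigl|2\ch[\pi\omega_1(\l-\l_j)]\bigr|$ grows like $\exp\bigl(\pi\Re\omega_1\sum_j|\l_j|\bigr)$, and $\pi\Re\omega_1$ is a fixed positive rate that cannot be made small by any choice of $\delta$. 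Your route therefore yields only $|\hat{\mu}'(\bl)[\T\psi](\bl)|\leq C\exp\bigl((\pi\Re\omega_1+\delta)\sum_j|\l_j|\bigr)$, strictly weaker than \eqref{f12}. This also defeats the purpose of the lemma: in Proposition~\ref{propositionf2} one \emph{divides} $[\T\psi](\bl)$ by the cosh product, and the decisive decay $e^{-\pi\Re\omega_1\sum_j|\l_j|}$ in \eqref{f13} appears precisely because the numerator bound \eqref{f12} has arbitrarily small rate $\delta$; with your weaker bound the division returns nothing. The identity you derive is the content of the proof of Proposition~\ref{propositionf2}, not a tool for proving the lemma.

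The paper instead proves the lemma by a direct estimate of the integrand --- the same technique you already use for $[\T\phi]$, only applied to $\psi=H(\l)\phi$ itself. Writing $H(\l)$ as the sum \eqref{f18} of the operators $H_J$ \eqref{f19}, one checks that in $\mu(\bx)\,H_J\,\phi(\bx)$ the denominators $\sh\frac{\pi}{\omega_2}(x_j-x_k)$ are cancelled by the corresponding factors of the measure, visible in the factorization \eqref{eta}, \eqref{Skl}; the product is then regular on $\R^n$ and has the same exponential behaviour as $\mu(\bx)\phi(\bx)$, giving $|\mu(\bx)\,H(\l)\,\phi(\bx)|\leq C e^{-\ve\sum_j|x_j|}$ (the paper's \eqref{f8a}). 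Combining this with the wave-function bound \eqref{f11a} gives absolute convergence and \eqref{f12} in one stroke, with no contour shifting and no use of the symmetry \eqref{f23} at all. This cancellation also repairs the error in your first paragraph: the coefficients of $H(\l)$ are \emph{not} bounded on the relevant strips --- they have poles at $x_j=x_k$, i.e.\ on the integration domain $\R^n$ itself --- so your justification of absolute convergence (and of $\psi$ satisfying a bound ``of the form \eqref{f8}'') fails as stated and must be replaced either by the measure-cancellation argument above or by residue cancellation after symmetrizing over $J$.
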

\begin{proof} 
		By definition, 
		\begin{align}
			[T \psi](\bl) = \int_{\R^n} d\bx \; \mu(\bx) \, \Psi_{\bl}(-\bx) \, H(\l) \, \phi(\bx).
		\end{align}
		The operator $H(\l)$~\eqref{f1} can be represented as the sum
			\begin{align}\label{f18}
			H(\l) = e^{- \pi n \omega_1 \l}  \sum_{J \subset \{1, \ldots, n\}} \bigl( e^{2\pi \omega_1 \l} \bigr)^{n - |J|}  \; {H}_J
		\end{align}
		with the terms
		\begin{align}\label{f19}
			{H}_J = \prod_{\substack{j \in J \\ k\not\in J}} \frac{ \sh \frac{ \pi}{{\omega}_2} ( x_j - x_k - \imath {g} )  }{ \sh \frac{ \pi}{{\omega}_2}  ( x_j - x_k  ) } \prod_{j \in J} e^{ - \frac{\imath {\omega}_1}{2} \partial_{x_j} } \, \prod_{k \not\in J} e^{ \frac{\imath {\omega}_1}{2} \partial_{x_k}  }.
		\end{align}
	 Recall also expression for the measure~\eqref{f00}, \eqref{Skl}
	 \begin{align}
	 	\mu(\bx)= \frac{1}{n!} \prod_{1 \leq j < k \leq n} 4 \sh \frac{ \pi}{{\omega}_1}  ( x_j - x_k  ) \, \sh \frac{ \pi}{{\omega}_2}  ( x_j - x_k  ) \prod_{\substack{j ,k = 1 \\ j \not=k}}^n S^{-1}_2( \imath x_j - \imath x_k +g). 
	 \end{align} 
	Clearly, in the product 
	\beq \mu(\bx) \, {H}_J \, \phi(\bx)\eeq
	the denominators from $H_J$~\eqref{f19} are canceled by corresponding factors from $\mu(\bx)$, so that the resulting coefficient coincides with $\mu(\bx)$ up to some factors $\sh \frac{ \pi}{{\omega}_2}  ( x_j - x_k  )$ replaced by $\sh \frac{ \pi}{{\omega}_2}  ( x_j - x_k -\imath g  )$. This coefficient has no singularities on integration domain $\bx \in \R^n$ and has the same exponential asymptotic behaviour as $\mu(\bx)$ for large $|\bx|$. 
	
	In other words, analogously to \eqref{f8} we have the bound
	\beq\label{f8a} |\mu(\bx) \, H(\l) \, \vf(\bx)| \leq C(g, \bo, \phi) \, \exp \biggl(-\ve\sum_{j = 1}^n |x_j|\biggr), \qquad \bx \in \R^n,
	\eeq
	with some $\ve>0$. On the other hand, we have uniform in $\bx \in \R^n$ estimate for the wave function~\rf{f11a} 
	 \beq 
	 |\hat{\mu}'(\bl) \, \Psi_{\bl}(-\bx) | \leq C(g, \bo) \, \exp \biggl( \delta\sum_{j = 1}^n |\l_j| \biggr) \eeq
	with arbitrary $\delta > 0$. Thus, the integral  
		\beq
		\hat{\mu}'(\bl) \, [\T \psi](\bl)=\hat{\mu}'(\bl)\int_{\R^n}d\bx \; \mu(\bx) \, \Psi_{\bl}(-\bx) \, H(\l) \, \vf(\bx)
		\eeq
		absolutely converges and admits the stated bound~\eqref{f12}.
\end{proof}
		
\begin{proposition}\label{propositionf2} Let $\phi(\bx) \in \SC_{\bo, g}$. Then for $\bl \in \R^n$ its image $[T\vf](\bl)$ admits the bound 
		\begin{align}\label{f13} 
			\bigl| \hat{\mu}'(\bl) \, [T\vf](\bl) \bigr| \leq C(g, \bo, \phi) \, \exp \biggl( [\delta - \pi \Re {\omega}_1] \sum_{j = 1}^n |\l_j| \biggr)
		\end{align}
		with any $\delta>0$.
\end{proposition}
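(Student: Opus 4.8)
The plan is to exploit the fact that the wave functions diagonalize the generating function $H(\l)$ with eigenvalue $\prod_{j=1}^n 2\ch[\pi\o_1(\l-\l_j)]$, which grows like $\exp\bigl(\pi\Re\o_1\sum_j|\l_j|\bigr)$ as $\bl\to\infty$. Combining this with the symmetry of $H(\l)$ under the bilinear pairing \eqref{bil-pair} will let me trade the slowly growing estimate of Lemma~\ref{lemmaf1} for the sharp decay claimed in \eqref{f13}.

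First I would establish the intertwining relation
\[
	[\T(H(\l)\phi)](\bl)=\prod_{j=1}^n 2\ch[\pi\o_1(\l-\l_j)]\,[\T\phi](\bl),\qquad \phi\in\SC_{\bo,g}.
\]
Writing $[\T\phi](\bl)=\bigl(\Psi_{\bl}(\bx),\phi(\bx)\bigr)_\mu$ as in \eqref{f27}, I move $H(\l)$ onto the wave function by the symmetry \eqref{f23}. This is legitimate because $\phi$ is analytic and rapidly decaying in the strips \eqref{f25} (which coincide with \eqref{f7}) and, by Corollary~\ref{cor:Psi-an}, so is $\Psi_{\bl}$ whenever $\Re g<\Re\o_2$; then \eqref{Hl-Psi} replaces $H(\l)\Psi_{\bl}$ by its scalar eigenvalue, which factors out through the reflection of variables and yields the stated relation. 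In the complementary regime $\Re g>\Re\o_1$, where $\Psi_{\bl}$ need not be analytic in the full strip, I use instead the $\Delta$-pairing form \eqref{f24} together with \eqref{f21}: since $\mu=\Delta\,\eta$ and $\eta$ is even in $\bx$, one has $[\T(H(\l)\phi)](\bl)=\bigl(\eta\Psi_{\bl},H(\l)\phi\bigr)_\Delta=\bigl(\eta H(\l)\Psi_{\bl},\phi\bigr)_\Delta$, and \eqref{f21} gives the same eigenvalue. Because $\Re\o_1\le\Re\o_2$ by \eqref{o-ord}, the two regimes $\Re g<\Re\o_2$ and $\Re g>\Re\o_1$ together cover the whole admissible range \eqref{g-cond}.

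Specializing to $\l=0$, where $H(0)=H_n^{-1/2}\sum_s H_s$ involves only the half-shifts $x_j\mapsto x_j\pm\tfrac{\imath\o_1}{2}$ permitted by the analyticity strips \eqref{f7}, the relation becomes
\[
	\hat\mu'(\bl)\,[\T\phi](\bl)=\frac{\hat\mu'(\bl)\,[\T(H(0)\phi)](\bl)}{\prod_{j=1}^n 2\ch(\pi\o_1\l_j)}.
\]
By Lemma~\ref{lemmaf1} the numerator is bounded by $C(g,\bo,\phi)\exp\bigl(\delta\sum_j|\l_j|\bigr)$ for any $\delta>0$. For the denominator I would record the elementary lower bound: writing $\o_1=a+\imath b$ with $a=\Re\o_1>0$, one has $|2\ch(\pi\o_1 t)|^2=4\sh^2(\pi a t)+4\cos^2(\pi b t)$ for real $t$, which vanishes only if simultaneously $\sh(\pi a t)=0$ (forcing $t=0$) and $\cos(\pi b t)=0$, impossible since $\cos 0=1$; a routine comparison then gives $|2\ch(\pi\o_1 t)|\ge c\,e^{\pi a|t|}$ uniformly in $t\in\R$ for some $c>0$. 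Hence $\prod_j|2\ch(\pi\o_1\l_j)|\ge c^n\exp\bigl(\pi\Re\o_1\sum_j|\l_j|\bigr)$, and dividing yields exactly \eqref{f13}.

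The main obstacle is the first step: justifying the intertwining relation rigorously. One must verify that the hypotheses of the symmetry identities \eqref{f23} and \eqref{f24} (analyticity and sufficient decay in the strips \eqref{f25} of the relevant products $\mu\,\Psi_{\bl}\,\phi$, respectively $\Delta\,\eta\,\Psi_{\bl}\,\phi$) are met, so that shifting the contour $\R^n$ by the half-steps of $H(0)$ is permissible, and that the case split on $\Re g$ relative to $\Re\o_2$ and $\Re\o_1$ genuinely exhausts \eqref{g-cond}. The subsequent $\ch$-estimate and the division are straightforward.
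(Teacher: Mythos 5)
Your strategy is the paper's own: derive the intertwining relation $[\T(H(\l)\phi)](\bl)=\prod_{j}2\ch[\pi\o_1(\l-\l_j)]\,[\T\phi](\bl)$ from symmetry of $H(\l)$ under the $\mu$-pairing \eqref{f23} when $\Re g<\Re\o_2$, and under the $\Delta$-pairing \eqref{f24} with $\eta(\bx)H(\l)$ as a single operator together with \eqref{f21} when $\Re g>\Re\o_1$, then divide the bound of Lemma~\ref{lemmaf1} by the eigenvalue. Where you genuinely improve on the paper is the division step: fixing $\l=0$ and using $|2\ch(\pi\o_1 t)|^2=4\sh^2(\pi\Re\o_1\,t)+4\cos^2(\pi\Im\o_1\,t)$, which never vanishes for real $t$, you get the uniform lower bound $|2\ch(\pi\o_1 t)|\ge c\,e^{\pi\Re\o_1|t|}$ in one stroke (positivity on compacts plus the ratio tending to $1$ at infinity). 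The paper instead bounds $|2\ch|$ from below only by $|2\sh(\pi\Re\o_1(\l-\l_j))|$, which vanishes at $\l=\l_j$; it is therefore forced to choose $\l$ separated from all $\l_j$ and to run a covering argument over finitely many values $\l^{(k)}=k/(\pi\Re\o_1)$, $k=0,\dots,n-1$, to make the constant uniform in $\bl\in\R^n$. Your single choice $\l=0$ eliminates that covering step, and it is correct.

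There is, however, one genuine gap: your claim that the regimes $\Re g<\Re\o_2$ and $\Re g>\Re\o_1$ ``together cover the whole admissible range \eqref{g-cond}'' is false. Since \eqref{o-ord} only gives $\Re\o_1\le\Re\o_2$, the case $\Re g=\Re\o_1=\Re\o_2$ (for instance $\o_1=\o_2=g=1$) satisfies \eqref{g-cond} yet lies in neither regime; there neither the eigenvalue equation \eqref{Hl-Psi} (which needs $\Re g<\Re\o_2$) nor its conjugated form \eqref{f21} (which needs $\Re g>\Re\o_1$) is available, so the intertwining relation cannot be obtained by either of your two routes. The paper devotes a separate third step to exactly this boundary set \eqref{f28}: both sides of \eqref{f13} are continuous in $(\bo,g)$ near that set, because the integral defining $[\T\phi](\bl)$ converges uniformly in the parameters --- this rests on the uniformity of the basic bounds \eqref{mu-bound}, \eqref{K-bound} on compact parameter sets --- and the inequality proved for generic parameters then extends by continuity. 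Your proof needs this (or an equivalent) supplement. The items you defer at the end --- verifying the analyticity and decay hypotheses of \eqref{f23} and \eqref{f24} --- are checked in the paper exactly along the lines you indicate, via Corollary~\ref{cor:Psi-an}, the reflection symmetry \eqref{refl}, and the bounds \eqref{f3}, \eqref{f9}, \eqref{eta-bound}, so no obstruction hides there.
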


\begin{proof} 
	{\bf 1}.  First, consider the case 
	\begin{align} \label{Reg-as}
		\Re g<\Re \o_2.
	\end{align}
	The action of~$T$ can be written as the bilinear pairing with the wave function~\eqref{f27}
	\begin{align}
		[\T \phi] (\bm{\lambda}) = \bigl(\Psi_{\bm{\lambda}}(\bm{x}), \phi(\bx) \bigr)_\mu.
	\end{align}
	To prove the stated bound~\eqref{f13} we use symmetry of Hamiltonians' generating function~$H(\l)$ with respect to this bilinear pairing (see Lemma~\ref{lemma:Hsym}) and the fact that wave functions diagonalize $H(\l)$~\eqref{Hl-Psi}
	\begin{align}\label{TH-id}
		\begin{aligned}
			[T {H}(\l) \, \phi](\bm{\bl}) &= \bigl( \Psi_{\bl}(\bm{x}) , \, H(\l) \, \phi(\bx) \bigr)_{{\mu}} = \bigl( H(\l) \, \Psi_{\bl}(\bm{x})  , \,  \phi(\bx) \bigr)_{{\mu}} \\[6pt]
			& = \prod_{j = 1}^n 2\ch \bigl( \pi {\omega}_1 (\l - \l_j) \bigr) \; [\T \phi](\bl).
		\end{aligned}
	\end{align}
	Notice that the conditions of Lemma~\ref{lemma:Hsym} are satisfied. The wave function $\Psi_{\bl}(\bx)$ is analytic in the strips
	\begin{align}
		| \Im x_j | \leq  \frac{\Re \omega_1}{2} < \frac{\Re g^*}{2}, \qquad j = 1, \dots, n
	\end{align}
	by Corollary~\ref{cor:Psi-an}, and the same is true for $\phi(\bx) \in \SC_{\bo, g}$. Besides, both functions $\Psi_{\bl}(\bx)$,~$\phi(\bx)$ admit bounds~\eqref{f3},~\eqref{f9}, which suit Lemma~\ref{lemma:Hsym}.

	Using Lemma~\ref{lemmaf1} for the left hand side of identity~\eqref{TH-id} we conclude that
		 \begin{align}\label{muT-b2}
		 	\bigl| \hat{\mu}'(\bl) \, [\T \phi](\bl) \bigr| \leq C(g, \bm{\omega}, \phi) \, \prod_{j = 1}^n \frac{e^{\delta | \l_j|}}{ \bigl| 2\ch \bigl(\pi {\omega}_1 (\l - \l_j)\bigr) \bigr|}
		 \end{align}
		 with arbitrary $\delta > 0$. Next let us estimate the right hand side of the last formula. Clearly, 
		 \begin{align}\label{ch-ineq}
		 	\bigl| 2\ch \bigl(\pi {\omega}_1 (\l - \l_j) \bigr) \bigr| \geq \bigl| 2\sh \bigl(\pi \Re  {\omega}_1 \, (\l - \l_j) \bigr) \bigr|.
		 \end{align}
	 	For the fixed $\bm{\l} \in \mathbb{R}^n$ consider $\l \in \mathbb{R}$ such that
	 \begin{align}
	 	| \l - \l_j | \geq \frac{\ln \sqrt{2}}{\pi \Re {\omega}_1}, \qquad j = 1, \dots, n.
	 \end{align}
	 Then one can write inequalities
	 \begin{align}
	 	\bigl| 2\sh \bigl(\pi \Re {\omega}_1 \, (\l - \l_j) \bigr) \bigr| \geq \frac{1}{2} \, e^{\pi \Re {\o}_1 \, |\l - \l_j|} \geq \frac{1}{2} \, e^{\pi \Re {\o}_1 \, (|\l_j| -  |\l|)}.
	 \end{align}
	 Combining this with~\eqref{muT-b2} and~\eqref{ch-ineq}  we get the estimate
	 \begin{align}
	 	\bigl| \hat{\mu}'(\bm{\l}) \, [T \phi](\bm{x}) \bigr| \leq C' \exp \biggl( [\delta - \pi \Re {\omega}_1] \sum_{j = 1}^n |\l_j| \biggr).
	 \end{align}
	 The choice of $\l$ depends on $\bm{\l}$, however, we need $C'$ to be uniform in~$\bm{\l} \in \mathbb{R}^n$. Surely, it can be made uniform, if we can cover all $\bm{\l} \in \mathbb{R}^n$ using finitely many choices of $\l$. 
	 
	 The choice $\l = 0$ covers all $\bm{\l} \in \mathbb{R}^n$ except the regions where \mbox{$|\l_j | < \ln \sqrt{2}/ \pi \Re \o_1$} for at least one $j$. To cover the unbounded parts of these regions it is enough to consider $n -1$ other choices of $\l$ separated from each other by at least $\ln \sqrt{2}/\pi \Re \o_1$. For example, 
	 \begin{align}
	 	\l^{(k)} = \frac{k}{\pi \Re \o_1}, \qquad k = 0, \ldots, n - 1.
	 \end{align}
	 In total, we cover all $\bm{\l} \in \mathbb{R}^n$ with large enough $|\bm{\l}|$ using finite number of~$\l^{(k)}$. Notice also that it is enough to cover all unbounded parts of $\R^n$, since for compact subsets of~$\R^n$ convergence of the integral $[T \phi](\bl)$ is uniform. Hence, the constant $C'$ can be made uniform for all $\bl \in \R^n$ and this gives us the stated result \eqref{f13}.
	 \smallskip
	 
	  {\bf 2}. Now consider the case
		 \begin{align}
		 	\Re g > \Re \o_1.
		 \end{align}
		Then diagonalization property of wave functions holds in the form~\eqref{f21}
		\begin{align}\label{f21-2}
			\eta(\bx) \, H(\l) \, \Psi_{\bl}(\bx) = \prod_{j = 1}^n 2 \ch [\pi \o_1 (\l - \l_j)] \, \eta(\bx) \, \Psi_{\bl}(\bx),
		\end{align}
		where $\eta(\bx) \, H(\l)$ is regarded as a single operator. Correspondingly, we rewrite the spectral transform~$T$ as the pairing
		\begin{align}
			[\T \phi] (\bm{\lambda}) = \bigl(\eta(\bx) \, \Psi_{\bm{\lambda}}(\bm{x}), \phi(\bx) \bigr)_\Delta.
		\end{align}
		Using symmetry of the operator $H(\l)$ rewritten in the form~\eqref{f24} and the formula~\eqref{f21-2} we arrive at the same result, as in the previous case,
		\begin{align}\label{f30}
			\begin{aligned}
				[T {H}(\l) \phi](\bm{\bl}) &= \bigl( \eta(\bx) \,\Psi_{\bl}(\bm{x}) , \, {H}(\l) \, \phi(\bx) \bigr)_{{\Delta}} = \bigl(\eta(\bx) {H}(\l) \, \Psi_{\bl}(\bm{x}) , \,  \phi(\bx) \bigr)_{{\Delta}} \\[6pt]
				& = \prod_{j = 1}^n 2\ch \bigl( \pi {\omega}_1 (\l - \l_j) \bigr) \; [\T \phi](\bl).
			\end{aligned}
		\end{align}
		The arguments afterwards are the same.
		
		 Let us only remark that to use symmetry of $H(\lambda)$ in the form~\eqref{f24} we need the functions $\eta(\bx) \, \Psi_{\bl}(\bx)$, $\eta(\bx) \, \phi(\bx)$ to be analytic and decaying fast enough in the strips
		\begin{align}\label{x-str4}
			| \Im x_j | \leq \frac{\Re \omega_1}{2} < \frac{\Re g}{2}, \qquad j = 1,\dots, n.
		\end{align}
		The analyticity of $\eta(\bx) \, \Psi_{\bl}(\bx)$ follows from Corollary~\ref{cor:Psi-an} and reflection symmetry~\eqref{refl}, while analyticity of $\eta(\bx) \, \phi(\bx)$ follows from assumption on $\phi(\bx) \in \SC_{\bo, g}$ and the fact that poles of~$\eta(\bx)$ (see \eqref{S2-zeroes})
		\beq x_j-x_k = \imath (g+m_1\o_1+m_2\o_2),\qquad j, k =1, \dots, n, \qquad m_i \in \mathbb{N}_0 \eeq
		are outside of the strips~\eqref{x-str4}. Finally, the sufficient decay follows from the same bounds on $\Psi_{\bl}(\bx)$, $\phi(\bx)$, as before, and the bound on $\eta(\bx)$ given by~\eqref{eta-bound}.
		\smallskip 
		
		{\bf 3}. By our assumption $\Re \o_1 \leq \Re \o_2$. Therefore, it is only left to consider the case
		\beq\label{f28}
			\Re g = \Re \o_1= \Re \o_2.
		\eeq
		Since $\phi \in \SC_{\bo, g}$, the left hand side of the inequality~\rf{f13} is an absolutely convergent integral, just as in all cases above. Moreover, its convergence is uniform in $\bo, g$ around the set~\eqref{f28}.  This follows from the fact that the basic bounds on measure and kernel functions~\eqref{mu-bound}, \eqref{K-bound} are uniform in $\bo, g$ varying on compact subsets of complex plane with restrictions~\eqref{o-cond},~\eqref{g-cond}, see \cite[Appendix A]{BDKK1}. By the above arguments, the function from the right side of inequality~\eqref{f13} is also continuous around the set~\eqref{f28}. Thus, the stated bound extends to the set \eqref{f28} as well.
\end{proof}
	
\begin{corollary}\label{corollaryf3} For any $\vf(\bx)\in\SC_{\bo,g}$ the function 
	\begin{align}
		[\TT\T\vf](\bx) = \int_{\R^n} d\bl \; \hat{\mu}(\bl) \, \Psi_{\bl}(\bx) \, [T \phi](\bl)
	\end{align}
is well defined and is given by absolutely convergent integral.
\end{corollary}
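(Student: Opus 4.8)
The plan is to recognize this as an essentially immediate consequence of Proposition~\ref{propositionf2}: the whole point of that proposition was to show that the image $[T\phi](\bl)$ decays fast enough in $\bl$ to land in the region of definition of $T^\dagger$. Concretely, recall from the discussion around \eqref{f9b}, \eqref{f11a} that $T^\dagger$ acts through an absolutely convergent integral on any integrable $\chi(\bl)$ obeying the exponential decay bound $|\hat{\mu}'(\bl)\,\chi(\bl)| \leq C\exp(-\ve\sum_j|\l_j|)$ for some $\ve>0$. So the entire task reduces to checking that $\chi = T\phi$ satisfies this condition.

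First I would recall that $T\phi$ is itself well defined for $\phi\in\SC_{\bo,g}$, since the bound \eqref{f8} is stronger than \eqref{muh-phi} and hence the defining integral for $[T\phi](\bl)$ converges absolutely. Then I would invoke Proposition~\ref{propositionf2}, which supplies the bound \eqref{f13}, namely $|\hat{\mu}'(\bl)\,[T\phi](\bl)| \leq C(g,\bo,\phi)\exp([\delta-\pi\Re\omega_1]\sum_j|\l_j|)$ for every $\delta>0$. Because $\Re\omega_1>0$ by \eqref{o-cond}, I can fix some $\delta$ with $0<\delta<\pi\Re\omega_1$ and set $\ve = \pi\Re\omega_1 - \delta > 0$; this converts \eqref{f13} into exactly the required decay condition \eqref{f9b} with $\chi = T\phi$. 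Thus $T\phi$ lies in the domain of $T^\dagger$.

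To make the absolute convergence explicit, I would factorize the dual measure as $\hat{\mu}(\bl)=\hat{\mu}'(\bl)\,\hat{\mu}'(-\bl)$ (the dual analog of \eqref{mu-half}, obtained via the duality map \eqref{f-hat}), and split the integrand of $[\TT\T\phi](\bx)$ as $|\hat{\mu}(\bl)\,\Psi_{\bl}(\bx)\,[T\phi](\bl)| = |\hat{\mu}'(-\bl)\,\Psi_{\bl}(\bx)|\cdot|\hat{\mu}'(\bl)\,[T\phi](\bl)|$. The first factor is controlled by the uniform wave-function estimate \eqref{f11a}, which grows at most like $e^{\delta'\sum_j|\l_j|}$ with $\delta'>0$ arbitrary, while the second factor decays like $e^{-\ve\sum_j|\l_j|}$ by the previous step. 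Choosing $\delta'<\ve$ makes the product integrable over $\bl\in\R^n$, establishing the claimed absolute convergence.

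There is no genuine obstacle here, as the analytic heavy lifting was done in Proposition~\ref{propositionf2}. The only point requiring a little care is bookkeeping of the free exponents: one must keep both $\delta$ (inside \eqref{f13}) and $\delta'$ (inside \eqref{f11a}) strictly smaller than $\pi\Re\omega_1$ and $\ve$ respectively, so that the net exponent multiplying $\sum_j|\l_j|$ in the final integrand is strictly negative and hence yields a convergent Gaussian-type dominating integral.
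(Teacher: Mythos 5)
Your proof is correct and follows exactly the paper's route: the paper's own proof is a one-line reduction to Proposition~\ref{propositionf2} together with the domain condition~\eqref{f9b}, which is precisely your argument (you merely spell out the factorization $\hat{\mu}(\bl)=\hat{\mu}'(\bl)\,\hat{\mu}'(-\bl)$ and the bound~\eqref{f11a} that underlie~\eqref{f9b}). The only cosmetic quibble is that the dominating integral is of exponential, not Gaussian, type.
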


\begin{proof}
	The statement follows from Proposition~\ref{propositionf2} and the condition \rf{f9b}.
\end{proof}

\begin{remark} 
	The decay condition on $\phi(\bx)$ in the above statements can be weakened using the stronger bound on wave functions proved in~\cite[Theorem 2.5]{HR3}
	\begin{align}
		\bigl| \mu'(\bx) \, \hat{\mu}'(\bl) \, \Psi_{\bl}(\bx) \bigr| \leq P(\bx) \, \exp \biggl( -2\pi \sum_{j = 1}^n \lambda_j \Im x_j \biggr),
	\end{align}
	where $P$ is some polynomial and certain conditions on variables $\bx, \bl$ and all parameters are assumed. However, this bound is proved only for real periods $\omega_1, \omega_2$, which restricts its use for now.
\end{remark}

\section{Inversion formula and orthogonality}  

\subsection{Statements} 

In this section we prove the main result of this paper~--- the inversion formula for the spectral transform $T$.  The proof relies on the bounds from the previous sections and the results of the paper~\cite{BDKK3}. As usual, the inversion formula also implies that the operator~$T$ respects the bilinear form.

\begin{theorem}\label{theoremf1} For any function $\vf(\bx)\in\SC_{\bo,g}$ and $\bx \in \R^n$
	\beq \label{inv} 
	[\TT\T\vf](\bx)=\vf(\bx). 
	\eeq
\end{theorem}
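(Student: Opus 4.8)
The plan is to unfold the composition $\TT\T$ into a double integral, recognize the inner integral as the regularized bilinear pairing of two wave functions, and then invoke the delta-sequence property quoted after \eqref{pr1}. Writing out the definitions \eqref{T-def}, \eqref{f6},
\begin{align*}
	[\TT\T\vf](\bx) = \int_{\R^n} d\bl \; \hat\mu(\bl) \, \Psi_{\bl}(\bx) \int_{\R^n} d\by \; \mu(\by) \, \Psi_{\bl}(-\by) \, \vf(\by),
\end{align*}
I would like to swap the order of integration and read off the inner $\bl$-integral as $\bigl(\Psi_{\bl}(\by),\Psi_{\bl}(\bx)\bigr)_{\hat\mu}$. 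The obstruction is that this pairing diverges: the wave functions do not decay in $\bl$, so the double integral is only conditionally convergent and Fubini does not apply directly. This is precisely the point where the regularization $R_{\l,\ve}(\bl)$ from \eqref{pr1} must be inserted.

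First I would introduce the regularized composition
\begin{align*}
	G_{\l,\ve}(\bx) = \int_{\R^n} d\bl \; \hat\mu(\bl) \, R_{\l,\ve}(\bl) \, \Psi_{\bl}(\bx) \, [\T\vf](\bl).
\end{align*}
With the factor $R_{\l,\ve}(\bl)$ supplying decay in $\bl$, the bound \eqref{f8} on $\mu(\by)\vf(\by)$ together with the uniform estimate \eqref{f11a} for $\hat\mu'(\bl)\Psi_{\bl}$ makes the associated double integral absolutely convergent, so Fubini is legitimate for each fixed $\l,\ve$. After exchanging integrals and applying parity \eqref{xl-refl} in the form $\Psi_{-\bl}(\by)=\Psi_{\bl}(-\by)$, the inner integral is exactly the regularized pairing \eqref{pr1}, giving
\begin{align*}
	G_{\l,\ve}(\bx) = \int_{\R^n} d\by \; \mu(\by) \, \vf(\by) \, \bigl(\Psi_{\bl}(\by),\Psi_{\bl}(\bx)\bigr)_{\hat\mu}^{\l,\ve}.
\end{align*}

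Next I would evaluate $G_{\l,\ve}(\bx)$ in two different ways and compare. On one hand, taking the iterated limit $\lim_{\l\to\infty}\lim_{\ve\to0^+}$ and using the delta-sequence statement quoted after \eqref{pr1}, applied to the test function $\mu(\by)\vf(\by)$, yields
\begin{align*}
	\lim_{\l\to\infty}\lim_{\ve\to0^+} G_{\l,\ve}(\bx) = \int_{\R^n} d\by \; \mu(\by)\,\vf(\by)\,\mu^{-1}(\bx)\,\delta(\bx,\by) = \vf(\bx).
\end{align*}
Here one checks that $\mu(\by)\vf(\by)$ lies in the function class required by \cite[Proposition 2]{BDKK3}: it is symmetric, analytic, and by \eqref{f8} decays exponentially together with its derivatives. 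On the other hand, since $R_{\l,\ve}(\bl)\to1$ in the same iterated limit while the unregularized integral converges absolutely by Corollary~\ref{corollaryf3}, dominated convergence gives
\begin{align*}
	\lim_{\l\to\infty}\lim_{\ve\to0^+} G_{\l,\ve}(\bx) = \int_{\R^n} d\bl \; \hat\mu(\bl)\,\Psi_{\bl}(\bx)\,[\T\vf](\bl) = [\TT\T\vf](\bx).
\end{align*}
Equating the two evaluations of the same limit proves \eqref{inv}.

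The hard part will be the analytic bookkeeping around the regularization rather than any new identity. Concretely, I expect two delicate points: justifying dominated convergence uniformly in the iterated limit, which requires a bound on $R_{\l,\ve}(\bl)$ uniform in $\l,\ve$ supplied by \cite{BDKK3}, and verifying that $\mu(\by)\vf(\by)$ genuinely falls in the space for which the delta-sequence convergence of \cite[Proposition 2]{BDKK3} was established. Everything else, namely the Fubini step and the parity rearrangement, is routine once the regularization is in place.
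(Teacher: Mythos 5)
Your proposal is correct and follows essentially the same route as the paper's proof: insert the regularization $R_{\l,\ve}(\bl)$, justify Fubini for fixed $\l,\ve$ via the bounds \eqref{f8}, \eqref{f11a}, \eqref{Rba}, apply the delta-sequence property \eqref{f32} on one side and dominated convergence (using the uniform bound and pointwise limit \eqref{Rb} together with Corollary~\ref{corollaryf3}) on the other. The only cosmetic difference is that you evaluate the regularized object $G_{\l,\ve}$ in two ways and equate, whereas the paper inserts the iterated limit directly into the absolutely convergent integral; the analytic content is identical.
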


\begin{theorem}\label{theoremf2}  
For any functions $\vf_1(\bx), \vf_2(\bx)\in\SC_{\bo,g}$
\beq\label{f14} \bigl([\T\vf_1](\bl), \, [\T\vf_2] (\bl) \bigr)_{\hat{\mu}} = \bigl( \vf_1(-\bx),\vf_2(\bx) \bigr)_\mu. \eeq
\end{theorem}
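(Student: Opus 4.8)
The plan is to deduce \eqref{f14} directly from the inversion formula of Theorem~\ref{theoremf1} together with the formal adjunction \eqref{TTadj} between $\T$ and $\TT$. Concretely, I would set $\chi(\bl) = [\T\vf_1](\bl)$ and $\phi(\bx) = \vf_2(\bx)$ in \eqref{TTadj}, which yields
\begin{align*}
	\bigl([\T\vf_1](\bl),\, [\T\vf_2](\bl)\bigr)_{\hat{\mu}} = \bigl([\TT\T\vf_1](-\bx),\, \vf_2(\bx)\bigr)_{\mu}.
\end{align*}
Since $\vf_1 \in \SC_{\bo,g}$, Theorem~\ref{theoremf1} gives $[\TT\T\vf_1](\bx) = \vf_1(\bx)$, and hence $[\TT\T\vf_1](-\bx) = \vf_1(-\bx)$. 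Substituting this into the right-hand side produces exactly $\bigl(\vf_1(-\bx), \vf_2(\bx)\bigr)_{\mu}$, which is the claim. Thus the entire statement reduces to checking that \eqref{TTadj} is legitimately applicable to the pair $\chi = [\T\vf_1]$, $\phi = \vf_2$, i.e.\ that the relevant integrals converge absolutely.

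Before invoking \eqref{TTadj} I would first verify that the left-hand side of \eqref{f14} is an absolutely convergent integral. Using the dual of the factorization \eqref{mu-half}, namely $\hat{\mu}(\bl) = \hat{\mu}'(\bl)\, \hat{\mu}'(-\bl)$, the integrand of the $\hat{\mu}$-pairing splits as $\bigl(\hat{\mu}'(-\bl)\, [\T\vf_1](-\bl)\bigr)\bigl(\hat{\mu}'(\bl)\, [\T\vf_2](\bl)\bigr)$. Each factor is controlled by the bound \eqref{f13} of Proposition~\ref{propositionf2}, giving decay $\exp\bigl([\delta - \pi\Re\o_1]\sum_j |\l_j|\bigr)$; choosing $\delta < \pi\Re\o_1$ makes the product decay exponentially, so the pairing converges absolutely. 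In particular $[\T\vf_1]$ satisfies the admissibility condition \eqref{f9b} for $\TT$, so that $[\TT\T\vf_1]$ is well defined, which is already recorded as Corollary~\ref{corollaryf3}.

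The only genuine work is then to justify the interchange of integration orders underlying \eqref{TTadj} for this specific pair, since expanding both $\T$ and $\TT$ turns each side into an iterated integral over $(\bx,\bl) \in \R^n \times \R^n$ involving $\mu(\bx)\,\hat{\mu}(\bl)\,\Psi_{\bl}(\pm\bx)$. I would establish absolute convergence of the associated double integral by combining the uniform wave-function estimate \eqref{f11a} (which bounds $\hat{\mu}'(\bl)\,\Psi_{\bl}(\mp\bx)$ by $\exp(\delta\sum_j|\l_j|)$) with the decay \eqref{f8} of $\mu(\bx)\vf_i(\bx)$ in $\bx$ and the decay \eqref{f13} of $\hat{\mu}'(\bl)\,[\T\vf_1](\bl)$ in $\bl$; the remaining $\hat{\mu}'$-factor is absorbed against the wave-function estimate, while the parity relation \eqref{xl-refl} together with the evenness of $\mu$ and $\hat{\mu}$ reconciles the two orientations of the variables. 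With Fubini's theorem thereby licensed, \eqref{TTadj} holds as stated and the three-line reduction above completes the proof. The main obstacle is precisely this convergence bookkeeping: one must confirm that the $\bx$- and $\bl$-decays remain simultaneously integrable after the measures $\mu,\hat\mu$ are distributed between the wave function and the transforms, which is exactly what the combination of Proposition~\ref{propositionf2} and the bounds of Corollary~\ref{cor:Psi-bound} delivers.
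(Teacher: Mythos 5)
Your proposal is correct and matches the paper's own proof in essence: both reduce Theorem~\ref{theoremf2} to the inversion formula of Theorem~\ref{theoremf1} plus a single interchange of the $\bx$- and $\bl$-integrations, justified by exactly the same trio of bounds \eqref{f8}, \eqref{f11a}, \eqref{f13} together with parity of the wave functions and evenness of the measures. The only (immaterial) difference is one of packaging: the paper applies inversion to $\vf_2$ and expands the $\mu$-pairing directly, whereas you apply it to $\vf_1$ and phrase the Fubini step as the verification of the adjunction relation \eqref{TTadj} for the pair $\chi=[\T\vf_1]$, $\phi=\vf_2$ --- mirror images of the same argument.
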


\begin{remark}
	 The additional sign in the right hand side of the equality~\eqref{f14} comes from the fact that $T$ and $T^\dagger$ are adjoint modulo reflection, see~\eqref{TTadj}. It is in accordance with widely used specialization of convolution property in the theory of Laplace and Mellin transforms, for example see~\cite[(2.4.8)]{AAR}.
\end{remark}

 The statements of Theorems~\ref{theoremf1},~\ref{theoremf2} admit reformulation in a language of distributions. Namely, in $(\SC_{\bo,g})'$ we have the equality
\beq \label{f15}
\bigl( \Psi_{\bl}(\by),\Psi_{\bl}(\bx) \bigr)_{\hat{\mu}}=\mu^{-1}(\bx) \, \delta(\bx,\by).\eeq
Here we consider $\SC_{\bo,g}$ with a topology of convergence on compact subsets inside the strips~\rf{f7} and the space $(\SC_{\bo,g})'$ of continuos functionals with corresponding weak topology. The inclusion of integrable functions $\a(\bx)$ into  $(\SC_{\bo,g})'$ is given by the relation
\beq 
\bigl(\alpha(\bx), \vf(\bx) \bigr) =\int_{\R^n} d\bx \; \mu(\bx) \, \alpha(\bx) \, \vf(\bx). \eeq
By $\delta(\bx, \by)$ we denote symmetrized product of delta functions
\begin{align}
	\delta(\bx, \by) = \frac{1}{n!} \sum_{\sigma \in S_n} \delta \bigl( x_1 - y_{\sigma(1)}\bigr) \cdots \delta \bigl( x_n - y_{\sigma(n)}\bigr),
\end{align}
which corresponds to the identity functional for symmetric functions.

Due to bispectral duality~\eqref{bisp}
\begin{align}
	\Psi_{\bm{\lambda}}(\bm{x}; g| \bm{\omega}) = \Psi_{\bm{x}}(\bm{\lambda}; \hat{g}^*| \hat{\bm{\omega}})
\end{align}
we have the same statements with spatial and spectral variables interchanged. That is, for any functions $\vf_1(\bl), \vf_2(\bl)\in\SC_{\hat{\bo},\hat{g}^\ast}$
\begin{align} \label{f14a} 
	[T \, T^\dagger \phi_i] (\bl) = \phi_i(\bl), \qquad  \bigl([ \TT\vf_1] (\bx), \, [\TT\vf_2](\bx) \bigr)_{\mu} = \bigl( \vf_1(-\bl),\vf_2(\bl) \bigr)_{\hat{\mu}},
\end{align} 
or equivalently, in the space of functionals $(\SC_{\hat{\bo},\hat{g}^\ast})'$
\beq \label{f15a} 
\bigl( \Psi_{\bl}(\bx),\Psi_{\bm{\rho}}(\bx) \bigr)_{\mu}=\hat{\mu}^{-1}(\bl) \, \delta(\bl,\bm{\rho}).\eeq
The last formula represents orthogonality of wave functions, while the previous one~\eqref{f15} represents their completeness.

\subsection{Delta sequence}

To prove the above theorems let us recall delta sequence constructed in \cite{BDKK3}. 
Introduce the regularizing function 
\begin{align}\label{R}
	R_{\l, \epsilon}(\bm{\l})  = \exp \biggl( \pi  ( {g}^\ast - 2 \varepsilon ) \biggl[ n\l -\sum_{j = 1}^n \l_j \biggr] \biggr) \, \prod_{j = 1}^n \hat{K}(\l - \l_j)
\end{align}
with regularization parameters $\lambda, \epsilon > 0$. Here, by definition~\eqref{f-hat},
\begin{align}
	\hat{K}(\lambda) \equiv K(\lambda; \hat{g}^* | \hat{\bo}) = S_2^{-1} \biggl(\imath \lambda +\frac{\hat{g}}{2} \bigg| \hat{\bo} \! \biggr) \, S_2^{-1} \biggl( - \imath \lambda +\frac{\hat{g}}{2} \bigg| \hat{\bo} \! \biggr).
\end{align}
From the bound on $K$-function~\eqref{K-bound} we have the estimate
\begin{align}\label{Rba}
	&| R_{\l, \epsilon}(\bm{\l})| \leq C(g,\bo) \, \exp \biggl( -2\pi \epsilon \sum_{j = 1}^n | \lambda - \lambda_j| \biggr), \qquad \bl \in \mathbb{R}^n
\end{align}
provided that $0 \leq \epsilon \leq \Re g^*/2$. Due to this estimate and asymptotics of the double sine function~\eqref{S2-asymp} the regularizing function has two important properties
\begin{align}\label{Rb}
	& | R_{\l, \epsilon}(\bl) | \leq C(g, \bm{\omega}), \qquad\quad  \lim_{\l \to \infty} \, \lim_{\varepsilon \to 0^+} R_{\l, \epsilon}(\bl)  = 1.
\end{align}
Next consider the regularized bilinear pairing between wave functions
\begin{align}\label{reg-pair}
		\bigl( \Psi_{\bm{\l}}(\by) ,\, \Psi_{\bm{\lambda}}(\bx) \bigr)_{\hat{\mu}}^{\l, \varepsilon} & = \int_{\mathbb{R}^n} d\bm{\l} \; \hat{\mu}(\bm{\l}) \, \Psi_{-\bm{\l}}(\bm{y}) \, \Psi_{\bm{\lambda}}(\bm{x})  \, R_{\l, \epsilon}(\bm{\l}),
\end{align}
where we assume $\bx, \by \in \mathbb{R}^n$. In \cite[Proposition 1]{BDKK3} it was proved that this integral is absolutely convergent and can be calculated explicitly
\begin{align} \label{f31} 
	\begin{aligned}
		\bigl( \Psi_{\bm{\l}}(\by) ,\, \Psi_{\bm{\lambda}}(\bx) \bigr)_{\hat{\mu}}^{\l, \varepsilon} & = \bigl[ \sqrt{\omega_1 \omega_2} \, S_2(\hat{g} | \hat{\bo}) \bigr]^{-n} \; \exp \biggl( 2\pi \imath \, \lambda \sum_{j = 1}^n(x_j - y_j) \biggr) \\[6pt]
		& \times \prod_{j, k = 1}^n K \biggl( x_j - y_k + \frac{\imath g^*}{2} - \imath\ve \biggr)
	\end{aligned}
\end{align} 
Moreover, by \cite[Proposition 2]{BDKK3} it forms a delta sequence in the Schwartz space
	\begin{align}\label{f32}
	\lim_{\lambda \to \infty} \, \lim_{\varepsilon \to 0^+} \bigl( \Psi_{\bm{\l}}(\by),\, \Psi_{\bm{\lambda}}(\bx) \bigr)_{\hat{\mu}}^{\l, \varepsilon} = \mu^{-1}(\bm{x}) \, \delta(\bm{x}, \bm{y}).
\end{align}
The last formula is the key result needed for the proof of Theorem~\ref{theoremf1}.

\subsection{Proof of Theorem~\ref{theoremf1}}

	 The product $[\TT\T \vf](\bx)$ is given by the integral 
	\begin{align}
		[T^\dagger T \phi ] (\bm{x}) = \int_{\mathbb{R}^n} d\bm{\l} \; \hat{\mu}(\bm{\l}) \, \Psi_{\bm{\l}}(\bm{x}) \, [\T \phi](\bm{\l}) 
	\end{align}
	with the test function $\phi(\bx) \in\mathcal{S}_{\bo, g}$ and $\bm{x} \in \mathbb{R}^n$. According to Corollary~\ref{corollaryf3} this integral is absolutely convergent. The regularization function $R_{\l, \epsilon}(\bm{\l})$~\eqref{R} is bounded uniformly in $\l, \epsilon, \bm{\l}$, see \eqref{Rb}, so we can insert it inside the above integral and use dominated convergence theorem to interchange limits and integration
	\begin{align} \nonumber
			[\TT \T \phi ] (\bm{x}) &= \int_{\mathbb{R}^n} d\bm{\l} \; \hat{\mu}(\bm{\l}) \, \Psi_{\bm{\l}}(\bm{x}) \, [\T \phi](\bm{\l}) \, \lim_{\l \to \infty} \, \lim_{\epsilon \to 0^+} R_{\l, \epsilon}(\bm{\l}) \\[6pt]  \label{f33}
			& =  \lim_{\l \to \infty} \,\lim_{\epsilon \to 0^+} \int_{\mathbb{R}^n} d\bm{\l} \; \hat{\mu}(\bm{\l}) \, \Psi_{\bm{\l}}(\bm{x}) \, [T \phi](\bm{\l}) \, R_{\l, \epsilon}(\bm{\l}).
	\end{align}
	Moreover, in the integral \rf{f33} for fixed $\lambda, \epsilon$ we can reverse the ordering of integrals to obtain regularized pairing of wave functions \eqref{reg-pair}
	\begin{align} \nonumber
			&\int_{\mathbb{R}^n} d\bm{\l} \; \hat{\mu}(\bm{\l}) \, \Psi_{\bm{\l}}(\bm{x}) \, [\T \phi](\bm{\l}) \, R_{\l, \epsilon}(\bm{\l}) \\[6pt] \nonumber
			&\qquad = \int_{\mathbb{R}^n} d\bm{\l} \; \hat{\mu}(\bm{\l}) \, \Psi_{\bm{\l}}(\bm{x})\, R_{\l, \epsilon}(\bm{\l}) \int_{\mathbb{R}^n} d \bm{y} \; {\mu}(\bm{y}) \, \Psi_{\bm{\lambda}}(-\bm{y}) \, \phi(\bm{y}), \\[6pt] \label{int-regpair}
			&\qquad = \int_{\mathbb{R}^n} d\bm{y} \; {\mu}(\bm{y}) \, \phi(\bm{y}) \, \bigl( \Psi_{\bm{\l}}(\by) ,\, \Psi_{\bm{\lambda}}(\bx) \bigr)_{\hat{\mu}}^{\l, \varepsilon}.
	\end{align}
	This is justified by Fubini's theorem, since the whole integral over $\bm{\lambda}, \bm{y}$ is absolutely convergent. Indeed, applying bound \rf{f11a} for the product $\hat{\mu}(\bl) \, \Psi_{\bl}(\bx) \, \Psi_{\bl}(-\by)$, bound \rf{f8} for the product $\mu(\by) \, \vf(\by)$ and the bound \rf{Rba} for $ R_{\l, \epsilon}(\bm{\l})$ we see that the integrand 
	\begin{align}
		\hat{\mu}(\bl) \, \Psi_{\bl}(\bx) \, \Psi_{\bl}(-\by) \, \mu(\by) \, \phi(\by) \, R_{\l, \epsilon}(\bl)
	\end{align} 
	is bounded by
	\begin{align}
		C(g,\bo, \l, \ve) \, \exp \biggl(
		-\ve'\sum_{j = 1}^n |y_j| - (2\pi\ve-\delta)\sum_{j = 1}^n |\l_j| \biggr) 
	\end{align}
	with some $\ve'>0$ dictated by $\vf(\bx)$ and arbitrary $\delta>0$. Thus, the integral over $\bl, \by$ is absolutely convergent.
	
	The regularized pairing in~\eqref{int-regpair} is a delta-sequence~\eqref{f32}. So, collecting all together we obtain the inversion formula
	\begin{align}
		\begin{aligned}
			[\TT\T \phi ] (\bm{x}) &= \lim_{\l \to \infty} \,\lim_{\epsilon \to 0^+} \int_{\mathbb{R}^n} d\bm{y} \; {\mu}(\bm{y}) \, \phi(\bm{y}) \,
			\bigl( \Psi_{\bm{\l}}(\by) ,\, \Psi_{\bm{\lambda}}(\bx) \bigr)_{\hat{\mu}}^{\l, \varepsilon}
			 \\[6pt]
			& = \int_{\mathbb{R}^n} d\bm{y} \; \phi(\bm{y}) \, \delta(\bm{x}, \bm{y}) = \phi(\bm{x}),
		\end{aligned}
	\end{align}
	where in the last step we use symmetry of $\phi(\bm{y})$ with respect to $y_j$. \hfill{$\Box$}
 
\subsection{Proof of Theorem \ref{theoremf2}}

The statement essentially follows from inversion formula applied to $\phi_2(\bx) \in \SC_{\bo, g}$ 
\begin{align}
	\phi_2(\bx) = \int_{\mathbb{R}^n} d\bl \; \hat{\mu}(\bl) \, \Psi_{\bl}(\bx) \, [T \phi_2] (\bl).
\end{align}
Let us multiply it by $\mu(\bx) \, \phi_1(\bx)$, where $\phi_1(\bx) \in \SC_{\bo, g}$, and integrate over $\bx$
\begin{align}
	\bigl( \phi_1(-\bx), \, \phi_2(\bx) \bigr)_\mu = \int_{\mathbb{R}^n} d\bx \; \mu(\bx) \, \phi_1(\bx) \, \int_{\mathbb{R}^n} d\bl \; \hat{\mu}(\bl) \, \Psi_{\bl}(\bx) \, [T \phi_2] (\bl).
\end{align}
The whole integral over $\bl, \bx$ is absolutely convergent due to Corollary~\ref{cor:Psi-bound}, Proposition~\ref{propositionf2} and since $\phi_1(\bx) \in \SC_{\bo, g}$ admits the bound~\eqref{f8}. Therefore, we can integrate over $\bx$ first
\begin{align}
	\begin{aligned}
		\bigl( \phi_1(-\bx), \, \phi_2(\bx) \bigr)_\mu & = \int_{\mathbb{R}^n} d\bl \; \hat{\mu}(\bl) \, [T \phi_2](\bl) \, \int_{\mathbb{R}^n} d\bx \; \mu(\bx) \, \Psi_{\bl}(\bx) \, \phi_1(\bx) \\[6pt]
		& = \int_{\mathbb{R}^n} d\bl \; \hat{\mu}(\bl) \, [T \phi_2](\bl) \, [T \phi_1](-\bl) = \bigl( [T \phi_1] (\bl), \, [T \phi_2](\bl) \bigr)_{\hat{\mu}},
	\end{aligned}
\end{align}
where we used parity of wave functions $\Psi_{\bl}(\bx) = \Psi_{-\bl}(-\bx)$. \hfill{$\Box$}

 Let us in addition demonstrate that one can prove Theorem~\ref{theoremf2} independently from inversion formula, but using the same delta sequence. Namely, for the functions $\phi_1(\bx), \phi_2(\bx)$ from $\SC_{\bo, g}$ the bilinear pairing 
 \beq  \label{f34} \bigl([\T\vf_1](\bl),\,[\T\vf_2](\bl)\bigr)_{\hat{\mu}} = \int_{\R^n} d\bl\;  \hat{\mu}(\bl) \, [\T\vf_1](-\bl) \, [\T\vf_2](\bl)\eeq
 is given by absolutely convergent integral due to the bound~\eqref{f13}. So, we can replace it by the double limit
  \beq \label{f35} \bigl([\T\vf_1](\bl),\,[\T\vf_2](\bl)\bigr)_{\hat{\mu}} = \lim_{\l \to \infty} \,\lim_{\epsilon \to 0^+} \int_{\mathbb{R}^n} d\bm{\l} \; \hat{\mu}(\bm{\l}) \, [\T\vf_1](-\bl) \, [\T\vf_2](\bl) \, R_{\l, \epsilon}(\bm{\l}). \eeq
  For fixed $\l, \ve$ the whole triple integral~\eqref{f35}
  \begin{align}
  	\int_{\R^{3n}}d\bl\, d\bx\, d\by \; \hat{\mu}(\bl) \, \mu(\bx) \, \mu(\by) \, \Psi_{-\bl}(-\bx) \, \Psi_{\bl}(-\by) \, \vf_1(\bx) \, \vf_2(\by) \, R_{\l, \epsilon}(\bm{\l})
  \end{align}  
  absolutely converges by the same arguments, as before. Hence, by Fubini's theorem we can integrate over $\bl$ first, so that
\begin{align} \nonumber
		& \bigl([\T\vf_1](\bl),\,[\T\vf_2](\bl)\bigr)_{\hat{\mu}} = \lim_{\l \to \infty} \,\lim_{\epsilon \to 0^+} \int_{\R^{2n}} d\bx\, d\by\; \mu(\bx) \, \mu(\by) \, \vf_1(\bx) \, \vf_2(\by) \, 	\bigl( \Psi_{\bm{\l}}(\by) ,\, \Psi_{\bm{\lambda}}(\bx) \bigr)_{\hat{\mu}}^{\l, \varepsilon} \\[6pt] 
		& \quad = \int_{\R^{2n}} d\bx\, d\by\; \mu(\bx) \, \vf_1(\bx) \, \vf_2(\by) \, \delta(\bm{x}, \bm{y}) = \bigl( \vf_1(-\bx),\vf_2(\bx) \bigr)_{\mu},
\end{align} 
where we used the delta sequence~\eqref{f32} and symmetry of $\phi(\by)$ with respect to $y_j$. 

\section{Unitarity regimes} \label{sec:un}

\subsection{Restrictions on parameters and scalar products}

We know four regimes of unitarity of Ruijsenaars hyperbolic system.
  Namely, we have two options for the periods
 \begin{align}\label{omega-un}
 	\omega_1, \omega_2 \in \mathbb{R}, \qquad \text{or} \qquad \bar{\omega}_1 = \omega_2,
 \end{align}
 and two (independent) options for the coupling
 \begin{align}
 	g \in \mathbb{R}, \qquad \text{or} \qquad \bar{g} = g^*.
 \end{align}
Here and in what follows bars mean complex conjugation.
Denote four possible combinations by roman numerals
\begin{align*} 
	&\mathrm{I} \colon && \hspace{-2.5cm} \o_1,\o_2 \in \R, && \hspace{-2cm} g\in\R,\\[2pt]
	&\mathrm{II} \colon && \hspace{-2.5cm} \bar{\omega}_1 = \omega_2,&&  \hspace{-2cm} g\in\R,\\[2pt]
	&\mathrm{III} \colon && \hspace{-2.5cm} \o_1, {\o}_2\in\R,&& \hspace{-2cm} \bar{g}=g^\ast,\\[2pt]
	&\mathrm{IV} \colon && \hspace{-2.5cm} \bar{\omega}_1 = \omega_2,&& \hspace{-2cm} \bar{g}=g^\ast.
\end{align*}
Notice that in all cases we have real $\omega_1 \omega_2$ and $\omega_1 + \omega_2$. Moreover, as before, we always assume conditions on parameters~\eqref{o-cond},~\eqref{g-cond}, which in the above regimes reduce to
\begin{align}
	\Re \omega_1 > 0, \qquad \Re \omega_2 > 0, \qquad 0 < \Re g < \omega_1 + \omega_2.
\end{align}
The condition $\bar{g} = g^* \equiv \omega_1 + \omega_2 - g$ is equivalent to fixing coupling's real part 
\begin{align}
	g = \frac{\omega_1 + \omega_2}{2} + \imath \Im g.
\end{align}

\begin{remark}
	The first regime is known due to the work of Ruijsenaars~\cite[Section~2.2]{R3}. The third regime in the context of Rujisenaars model is considered in~\cite{DKKSS} and has previously appeared in modular $XXZ$ spin chain~\cite{DKM1}. Unitarity with complex conjugated periods in the context of modular double of $U_q(sl_2)$ is discussed in \cite{F1, F2, KLS, BT}. 
\end{remark} 

Now let us describe differences between unitarity regimes. For this recall that wave functions $\Psi_{\bl}(\bx| \omega_1, \omega_2)$ diagonalize the Hamiltonians~\eqref{Hh}
\begin{align} \label{Hs}
	H_s(\omega_1, \omega_2)= \sum_{ \substack{J \subset \{1, \dots, n\} \\ |J| = s}} \, \prod_{\substack{j \in J \\ k\not\in J}} \frac{\sh \frac{\pi}{\omega_2} ( x_j - x_k - \imath g)}{ \sh \frac{\pi}{\omega_2} (x_j - x_k) } \, \prod_{j \in J} e^{- \imath \omega_1 \partial_{x_j}}, \qquad s = 1, \dots, n,
\end{align}
where we emphasized in notation the dependence on periods. Since wave functions are symmetric in $\omega_1, \omega_2$, they also diagonalize operators with interchanged periods~$H_s(\omega_2, \omega_1)$. The regimes I, II differ from the pair III, IV by scalar products, with respect to which all these Hamiltonians are symmetric. 

Namely, consider regimes I, II and the scalar product
 \beq\label{f20} \langle \phi_1(\bx), \phi_2(\bx) \rangle_\mu =\int_{\R^n} d\bx \; \mu(\bx) \, \overline{\phi_1(\bx)} \, \phi_2(\bx)
 \eeq
 with the same measure $\mu(\bx)$, as in the bilinear pairing~\eqref{bil-pair} used in previous sections,
 	\begin{align}\label{mu}
 	\mu(\bx) = \frac{1}{n!} \prod_{\substack{j ,k = 1 \\ j \not=k}} \mu(x_j - x_k), \qquad \mu(x) = \frac{S_2(\imath x | \bo)}{S_2(\imath x + g | \bo)}.
 \end{align}
Note that in these cases $\mu(\bx)\geq 0$. Then in regime I (real periods) all Hamiltonians $H_s(\omega_1, \omega_2)$ and $H_s(\omega_2, \omega_1)$ are symmetric with respect to this scalar product
\begin{align}\label{f40a}
	H_s^\dagger(\omega_1, \omega_2) = H_s(\omega_1, \omega_2), \qquad H_s^\dagger(\omega_2, \omega_1) = H_s(\omega_2, \omega_1),
\end{align}
 while in regime II (complex conjugated periods) we have the equality
 \begin{align}\label{f40} 
 	H^\dagger_s(\o_1,\o_2)=H_s(\o_2,\o_1),
 \end{align}
 so that the symmetric operators are the linear combinations
 \beq\label{f41}  H_s(\o_1,\o_2)+ H_s(\o_2,\o_1), \qquad \imath \bigl[ H_s(\o_1,\o_2)- H_s(\o_2,\o_1) \bigr].\eeq
 This follows from the difference equations on the measure~$\mu(\bx)$, the details are given in Appendix~\ref{AppendixB}.
 
 For regimes III, IV consider scalar product with the different measure $\Delta(\bx)$
 \begin{align}\label{f42}
 	\langle \phi_1(\bx), \phi_2(\bx) \rangle_\Delta =\int_{\R^n} d\bx \; \Delta(\bx) \, \overline{\phi_1(\bx)} \, \phi_2(\bx)
 \end{align}
given by the formula~\eqref{Skl}
 \begin{align}
 	\Delta(\bm{x}) = \frac{\mu(\bm{x}) }{ \eta(\bm{x})} = \frac{1}{n!} \prod_{1 \leq j < k \leq n} 4  \sh \frac{ \pi (x_j - x_k)}{{\omega}_1}\sh \frac{ \pi(x_j - x_k)}{{\omega}_2} \geq 0.
 \end{align}
 Then in regime III all Hamiltonians $H_s(\omega_1, \omega_2)$ and $H_s(\omega_2, \omega_1)$ are symmetric with respect to the latter scalar product, whereas in regime IV we have the relation~\rf{f40}, so that the symmetric operators are again given by the combinations~\rf{f41}, see Lemma~\ref{lemma:Hsym2} in Appendix~\ref{AppendixB}.

 Below we consider spectral transforms defined by means of the above scalar products. These transforms essentially coincide with the previously studied transform~$T$, defined by bilinear pairing~\eqref{f27}. Consequently, these transforms admit the same inversion formula, which can be used to extend them to isomorphisms of the corresponding $L_2$ spaces. This gives a unified proof of completeness and orthogonality of wave functions in all unitarity regimes.

\subsection{Regimes I, II}

 Consider any of the two options for the periods
 \begin{align}
 	\omega_1, \omega_2 > 0, \qquad \text{or} \qquad  \bar{\omega}_1 = \omega_2, \quad \Re \omega_1 > 0, \quad \Re \omega_2 > 0
 \end{align}
 and real coupling 
 \begin{align}
 	0 < g < \omega_1 + \omega_2.
 \end{align}
 Then due to explicit formulas for the wave functions~\eqref{Psi} and the measure~\eqref{mu} we have
 \begin{align}\label{conj}
 	\Psi_{\bm{\lambda}}(-\bm{x}) = \overline{\Psi_{\bm{\lambda}}(\bm{x})}, \qquad \mu(\bm{x}) \geq 0, \qquad \hat{\mu}(\bm{\lambda}) \geq 0, \qquad \bx, \bl \in \mathbb{R}^n.
 \end{align}
As a consequence, the transforms $F$ and $F^\dagger$ defined by means of the scalar product~\eqref{f20} and the one with dual measure
\begin{align} \label{f48} 
	\begin{aligned}
		& [F\phi](\bl) = \langle \Psi_{\bl}(\bx),\phi(\bx)\rangle_\mu,\\[6pt]
		& [F^\dagger\chi](\bx) = \langle\overline{\Psi_{\bl}(\bx)},\chi(\bl)\rangle_{\hat{\mu}}
	\end{aligned}
\end{align}
coincide with $\T$ and $\TT$~\eqref{f27}
\begin{align}
	F = T, \qquad F^\dagger = T^\dagger.
\end{align}
So, the inversion formula~\eqref{inv} remains valid
\begin{align}\label{invF}
	[F^\dagger F \phi](\bx) = \phi(\bx),
\end{align}
whereas the equivariance property~\eqref{f14} turns into 
\beq\label{f49} \bigl\langle[F\vf_1](\bl),\,[F\vf_2](\bl) \bigr\rangle_{\hat{\mu}} = \langle\vf_1(\bx), \, \vf_2(\bx)\rangle_\mu, \eeq
where $\phi_1, \phi_2 \in \SC_{\bo, g}$.  The arguments for the last formula are the same, as for the Theorem~\ref{theoremf2}
\beq
	\bigl([\T\vf_1](\bl), \, [\T\vf_2] (\bl) \bigr)_{\hat{\mu}} = \bigl( \vf_1(-\bx),\vf_2(\bx) \bigr)_\mu.
\eeq
The difference in the sign of $\bx$ from the right arises, because
\begin{align}
	& \bigl\langle \chi(\bl), \, [F \phi](\bl) \bigr\rangle_{\hat{\mu}} = \bigl\langle [F^\dagger \chi](\bx), \, \phi(\bx) \bigr\rangle_{\mu}, \\[8pt]
	& \bigl( \chi(\bl), \, [F \phi](\bl) \bigr)_{\hat{\mu}} = \bigl( [F^\dagger \chi](-\bx), \, \phi(\bx) \bigr)_{\mu},
\end{align}
that is~$F^\dagger$ is adjoint to $F$ with respect to the scalar product and at the same time it is adjoint \textit{modulo reflection} with respect to the bilinear pairing.

	In particular, the equality \eqref{f49} holds for the subset of functions
 	\begin{align} \label{Pset}
 		\mathcal{P} =  \bigl\{ e^{- \bm{x}^2}  P(\bm{x}) \, \big| \, \text{$P$ --- symmetric polynomial} \bigr\} \subset \S_{\bo, g}.
 	\end{align}
 	This subset is dense in the space $L_2^{\mathrm{sym}}(\mathbb{R}^n, {\mu})$, which consists of symmetric functions square integrable with respect to the scalar product~\eqref{f20}. Recall classical arguments for that~\cite[VIII.4.3]{KF}. Let $h(\bx)$ be from orthogonal complement of~$\mathcal{P}$ in $L_2^{\mathrm{sym}}(\mathbb{R}^n, {\mu})$, that~is
   \begin{align}\label{h-as}
   	\int_{\mathbb{R}^n} d\bx \; \mu(\bx) \, \overline{\phi(\bx)} \, h(\bx) = 0
   \end{align}
   for all $\phi(\bx) \in\mathcal{P}$. Since the measure $\mu(\bx)$ has at most exponential growth~\eqref{f10}, the function
   \begin{equation}
   		f(\bx)=\mu(\bx) \, h(\bx) \, e^{- \bm{x}^2}
   \end{equation}
   	is absolutely integrable, decreases more than exponentially, and its Fourier image
   	\begin{align}
   		\check{f}(\bl) = \int_{\mathbb{R}^n} d\bx \; e^{\imath \bl \cdot \bx} \, \mu(\bx) \, h(\bx) \, e^{-\bx^2}
   	\end{align}
   	 is symmetric function analytic in $\C^n$. The Taylor coefficients of $\check{f}(\bl)$ at the point $\bl = 0$ are given by scalar products of $h(\bx)$ with elements from $\mathcal{P}$, so they equal to zero by assumption~\eqref{h-as}. Thus, $\check{f} \equiv 0$, and from Plancherel theorem
   	 \begin{align}
   	 	\int_{\mathbb{R}^n} d\bx \; |f(\bx)|^2 = \frac{1}{(2\pi)^n} \, \int_{\mathbb{R}^n} d\bl \; |\check{f}(\bl) |^2 = 0 
   	 \end{align}
   	 we have $f \equiv 0$, so that $h \equiv 0$. This proves that $\mathcal{P}$ is dense.
   	
   	Because of Proposition~\ref{propositionf2} the images~$[F \phi_i](\bl)$ in the identity~\eqref{f49} belong to the space~$L_2^{\mathrm{sym}}(\mathbb{R}^n, \hat{\mu})$. Hence, by a standard procedure~\cite[VIII.5]{KF} the transform $F$ extends from the dense subset $\mathcal{P}$ to the whole $L_2$ space
   	\begin{align}
   		F \colon \; L_2^{\mathrm{sym}}(\mathbb{R}^n, {\mu}) \; \to \; L_2^{\mathrm{sym}}(\mathbb{R}^n, \hat{\mu}).
   	\end{align}
   	In other words, $F$ is an isometry and the identity
   	\begin{align}
   		F^\dagger F = \mathrm{Id}
   	\end{align}
   	holds in $L_2$ sense. Due to the bispectral duality~\eqref{bisp}
   	\begin{align}\label{bisp2}
   		\Psi_{\bm{\lambda}}(\bm{x}; g| \bm{\omega}) = \Psi_{\bm{x}}(\bm{\lambda}; \hat{g}^*| \hat{\bm{\omega}})
   	\end{align}
   	the same identity holds in reverse order
   	\begin{align}
   		F F^\dagger = \mathrm{Id}.
   	\end{align}
   	Thus, $F$ is unitary operator. 
 
\subsection{Regimes III, IV}

 Now we have $\bar{g} = g^*$, or equivalently
 \begin{align}
 	g = \frac{\omega_1 + \omega_2}{2} + \imath \Im g, \qquad \Im g \in \mathbb{R}.
 \end{align}
Due to reflection symmetry \eqref{refl} and explicit formula~\eqref{Psi}
 \begin{align}\label{conj2}
 		\eta(\bx) \, \hat{\eta}(\bl) \, \Psi_{\bm{\lambda}}(-\bm{x}; g) & = \Psi_{\bm{\lambda}}(-\bm{x}; g^*) =  \overline{\Psi_{\bm{\lambda}}(\bm{x}; g)},
 \end{align}
 where we assume $\bm{x}, \bm{\lambda} \in \mathbb{R}^n$. Under the same assumption
 \begin{align}\label{f51}
 	& \Delta(\bm{x}) = \frac{\mu(\bm{x}) }{ \eta(\bm{x})} \geq 0, \qquad  \hat{\Delta}(\bm{\lambda}) = \frac{\hat{\mu}(\bm{\lambda})}{ \hat{\eta}(\bm{\lambda}) }  \geq 0, \qquad |\eta(\bm{x})|=|\hat{\eta}(\bm{\lambda})|=1,
 \end{align} 
 where the last two equalities follow from definition of $\eta(\bx)$~\eqref{eta} and reflection equation for the double sine function~\eqref{S2-refl}.
 
From~\eqref{conj2} we obtain the following relations between the scalar products and bilinear pairings
\begin{align}\label{f53}
	\begin{aligned}
		\langle\Psi_{\bl}(\bx),\vf(\bx)\rangle_\Delta& = \hat{\eta}(\bl) \, \bigl( \Psi_{\bl}(\bx), \, \vf(\bx) \bigr)_\mu, \\[6pt] 
		\langle \overline{\Psi_{\bl}(\bx)},\chi(\bl)\rangle_{\hat{\Delta}}& = \bigl( \Psi_{-\bl}(\bx), \, \hat{\eta}^{-1}(\bl) \, \chi(\bl) \bigr)_{\hat{\mu}}.
	\end{aligned}
\end{align}
These identities imply that the transforms
\begin{align}\label{f54} 
	\begin{aligned} 
		& [F\phi](\bl)= \langle \Psi_{\bl}(\bx),\phi(\bx)\rangle_\Delta,\\[6pt]
		& [F^\dagger\chi](\bx)=\langle\overline{\Psi_{\bl}(\bx)},\chi(\bl)\rangle_{\hat{\Delta}}
	\end{aligned}
\end{align}
are related to the transforms $\T$ and $\TT$~\eqref{f27} by the formulas
\beq\label{f55} 
F=\hat{\eta}(\bl) \, \T, \qquad F^\dagger =\TT \, \hat{\eta}^{-1}(\bl),
\eeq
so that
\beq\label{f56} F^\dagger F = T^\dagger T. \eeq
Hence, the inversion formula 
\begin{align}
	[F^\dagger F  \vf](\bx) = \phi(\bx)
\end{align}
again holds for $\phi(\bx) \in\S_{\bo, g}$. Note that since $| \hat{\eta}(\bl)| = 1$, estimates for the images $[F \phi](\bl)$ and $[\T\phi](\bl)$ coincide. Repeating arguments from the proof of Theorem~\ref{theoremf2}, from inversion formula we derive 
\begin{align}\label{f57} 
	\bigl\langle[F\vf_1](\bl),\,[F\vf_2](\bl) \bigr\rangle_{\hat{\Delta}} = \langle\vf_1(\bx), \, \vf_2(\bx)\rangle_\Delta, 
\end{align}
where $\phi_1, \phi_2 \in \SC_{\bo,g}$. 

Thus, in the same way, as before, the transform $F$ extends from the dense subset~$\mathcal{P}$~\eqref{Pset} to the isometry of $L_2$ spaces
\beq F \colon \; L_2^{\mathrm{sym}}(\mathbb{R}^n, {\Delta}) \; \to \; L_2^{\mathrm{sym}}(\mathbb{R}^n, \hat{\Delta}), \eeq
and therefore, as a consequence of bispectral duality~\eqref{bisp2}, it is unitary
\begin{align}
	F^\dagger F = \mathrm{Id}, \qquad F F^\dagger = \mathrm{Id}.
\end{align}

\subsection{Unitary isomorphisms}

We summarize the arguments above in the following theorem.

\begin{theorem}\label{theoremf3} In all four regimes I--IV the spectral transform $F$, given by scalar products~\rf{f48} and \rf{f54} with wave function $\Psi_{\bl}(\bx)$, admits the inversion formula
	\beq\label{f59} [F^\dagger F\vf](\bx)=\vf(\bx), \qquad \vf\in\S_{\bo, g}, \qquad \bx \in \mathbb{R}^n,\eeq
	and extends to the unitary isomorphism of Hilbert spaces
	\beqq L_2^{\mathrm{sym}}(\R^n , \mu)\qquad\text{and}  \qquad L_2^{\mathrm{sym}}(\R^n, \hat{\mu}),\eeqq
	or, correspondingly, 
	\beqq L_2^{\mathrm{sym}}(\R^n, \Delta)\qquad\text{and} \qquad L_2^{\mathrm{sym}}(\R^n, \hat{\Delta}).\eeqq
\end{theorem}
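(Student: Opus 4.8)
The plan is to deduce all four regimes from the already-established inversion formula (Theorem~\ref{theoremf1}) and equivariance property (Theorem~\ref{theoremf2}) of the bilinear transform $\T$, using the identities that tie the Hermitian transform $F$ to $\T$ regime by regime. The starting point in each case is the behaviour of the wave function under complex conjugation together with positivity of the relevant measure. In regimes I, II one has $\Psi_{\bl}(-\bx)=\overline{\Psi_{\bl}(\bx)}$ and $\mu(\bx),\hat{\mu}(\bl)\geq 0$ for real arguments, which gives $F=\T$ and $F^\dagger=\TT$ verbatim. In regimes III, IV the coupling reflection symmetry \rf{refl} yields the relation $\eta(\bx)\,\hat{\eta}(\bl)\,\Psi_{\bl}(-\bx;g)=\overline{\Psi_{\bl}(\bx;g)}$ together with $\Delta(\bx),\hat{\Delta}(\bl)\geq 0$ and $|\eta(\bx)|=|\hat{\eta}(\bl)|=1$, which produces \rf{f55}, so that $F=\hat{\eta}(\bl)\,\T$ and $F^\dagger=\TT\,\hat{\eta}^{-1}(\bl)$.

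The first consequence is the inversion formula \rf{f59}. In regimes I, II it is literally \rf{inv}, while in regimes III, IV the unimodular phase cancels in the composition, $F^\dagger F=\TT\,\hat{\eta}^{-1}(\bl)\,\hat{\eta}(\bl)\,\T=\TT\T$, so \rf{inv} applies again; in both cases admissibility of $\TT\T$ on $\S_{\bo,g}$ is guaranteed by Corollary~\ref{corollaryf3}. The second consequence is the isometry relation: rewriting the bilinear equivariance \rf{f14} through the conjugation relations converts $\bigl(\vf_1(-\bx),\vf_2(\bx)\bigr)_\mu$ into the Hermitian product $\langle\vf_1,\vf_2\rangle_\mu$ (respectively $\langle\vf_1,\vf_2\rangle_\Delta$), and likewise on the spectral side, yielding $\langle[F\vf_1],[F\vf_2]\rangle_{\hat{\mu}}=\langle\vf_1,\vf_2\rangle_\mu$ for $\vf_1,\vf_2\in\S_{\bo,g}$, together with its $\Delta$-counterpart.

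It remains to upgrade the isometry on $\S_{\bo,g}$ to a unitary isomorphism of the full $L_2$ spaces. Here I would invoke the dense subspace $\mathcal{P}$ of functions $e^{-\bx^2}P(\bx)$ with $P$ a symmetric polynomial, noting that $\mathcal{P}\subset\S_{\bo,g}$ since Gaussians satisfy the bound \rf{f9} and are entire. Density in $L_2^{\mathrm{sym}}(\R^n,\mu)$ (respectively with measure $\Delta$) follows from the classical Fourier argument, in which a function $h$ orthogonal to $\mathcal{P}$ forces the entire function $\check{f}(\bl)=\int_{\R^n}d\bx\;e^{\imath\bl\cdot\bx}\,\mu(\bx)\,h(\bx)\,e^{-\bx^2}$ to have vanishing Taylor coefficients at the origin, hence $\check{f}\equiv 0$ and $h\equiv 0$ by Plancherel. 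Proposition~\ref{propositionf2} ensures that the images $[F\vf_i](\bl)$ lie in the target space, so the isometry $F$ extends continuously from $\mathcal{P}$ to all of $L_2^{\mathrm{sym}}(\R^n,\mu)$, whence $F^\dagger F=\mathrm{Id}$ in the $L_2$ sense. Finally, bispectral duality \rf{bisp} interchanges the roles of spatial and spectral variables and thereby delivers the reversed identity $FF^\dagger=\mathrm{Id}$, completing the proof that $F$ is unitary.

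The main obstacle I anticipate is not any single hard estimate — the analytic heavy lifting (the wave-function bounds, the delta sequence, and the inversion formula itself) is already in place — but rather the careful bookkeeping that keeps all four regimes aligned under a single argument. In particular one must track the unimodular factor $\hat{\eta}(\bl)$ so that it cancels in $F^\dagger F$ while leaving the isometry relation unchanged (using $|\hat{\eta}(\bl)|=1$), and one must verify at each step that the conjugation relations \rf{conj}, \rf{conj2} are applied on the correct real domain of variables. The one genuinely nontrivial structural input is the closing step: isometry by itself gives only $F^\dagger F=\mathrm{Id}$, and it is bispectral duality that supplies the surjectivity needed for $FF^\dagger=\mathrm{Id}$.
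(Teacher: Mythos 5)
Your proposal is correct and follows essentially the same route as the paper's own proof: the identifications $F=\T$ (regimes I, II) and $F=\hat{\eta}(\bl)\,\T$ (regimes III, IV) with the phase cancelling in $F^\dagger F$, the transfer of the bilinear equivariance \rf{f14} to the Hermitian isometry, density of $\mathcal{P}$ via the classical Fourier/Plancherel argument, extension using Proposition~\ref{propositionf2}, and bispectral duality for $FF^\dagger=\mathrm{Id}$. All of these are exactly the steps the paper assembles in Section~\ref{sec:un}, so there is nothing to add.
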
 

Let us remark that in the particular case $g = \omega_2$ the Ruijsenaars system becomes ``free'': the Hamiltonians~\eqref{Hs} reduce to pure difference operators
\begin{align}
	H_s \bigr|_{g = \omega_2} =  (-1)^{s (n - s)} \, \sum_{ \substack{J \subset \{1, \dots, n\} \\[1pt] |J| = s}} \, \prod_{j \in J} e^{- \imath \omega_1 \partial_{x_j}}.
\end{align}
As shown by Halln\"as and Ruijsenaars~\cite[Theorem 3.1]{HR1}, the integral representing wave functions~\eqref{Psi} in this case can be calculated explicitly
\begin{align} \label{Psi-free}
	\Psi_{\bm{\lambda}}(\bm{x}) \bigr|_{g = \omega_2} = \prod_{1 \leq j < k \leq n} \frac{1}{4 \imath \sh \frac{\pi (x_j - x_k)}{\omega_1} \, \sh \pi \omega_1 (\lambda_j - \lambda_k) } \, \sum_{\sigma \in S_n} (-1)^{\sigma} \, e^{2\pi \imath \, \bm{\lambda} \cdot \sigma(\bm{x})}.
\end{align} 
Due to modular symmetry $\Psi_{\bm{\lambda}}(\bm{x}; g| \omega_1, \omega_2) = \Psi_{\bm{\lambda}}(\bm{x}; g| \omega_2, \omega_1)$, the same is true for $g = \omega_1$. 

The restriction $g = \omega_2$ is possible within regimes I and IV. Notice that in regime I the corresponding measures simplify (due to difference equation~\eqref{S2-diff})
\begin{align}
	&	\mu(\bm{x}) \bigr|_{g = \omega_2} = \prod_{\substack{j ,k = 1 \\ j \not=k}}^n \frac{S_2(\imath x_j - \imath x_k)}{S_2(\imath x_j - \imath x_k + \omega_2)} = \prod_{1 \leq j < k \leq n} 4 \sh^2 \frac{\pi (x_j - x_k)}{\omega_1}, \\[6pt]
	&	\hat{\mu}(\bm{\lambda})\bigr|_{g = \omega_2} = \prod_{1 \leq j < k \leq n} 4 \sh^2 \pi \omega_1 (\lambda_j - \lambda_k).
\end{align}
By absorbing half of these measures into the integral kernels of transforms $F$ and $F^\dagger$, given by~\eqref{Psi-free}, and the other half into functions, on which they act, one can show that $F$ and $F^\dagger$ are equivalent to multidimensional Fourier transform between spaces of square integrable \textit{antisymmetric} functions. The same is true regarding regime IV.

In the case of arbitrary coupling $g$, one can rescale by halfs of measure functions in a similar way, which leads to even more symmetry between spatial and spectral variables. Namely, consider any of the above unitarity regimes and denote the corresponding measure as $w(\bm{x})$. Let us introduce rescaled wave functions
\begin{align}
	\Phi_{\bm{\lambda}}(\bm{x}) = \sqrt{ \frac{w(\bm{x}) \, \hat{w}(\bm{\lambda}/\omega_1 \omega_2)}{(\omega_1 \omega_2)^n} } \, \Psi_{\bm{\lambda}/\omega_1 \omega_2}(\bm{x})
\end{align}
and the related spectral transforms
\begin{align} \label{Udef}
	\begin{aligned}
		& [U \phi](\bm{\lambda}) = \int_{\mathbb{R}^n} d\bm{x} \; \overline{\Phi_{\bm{\lambda}}(\bm{x}) } \, \phi(\bm{x}), \\[6pt]
		& [U^\dagger \chi](\bm{x}) = \int_{\mathbb{R}^n} d\bm{\lambda} \; \Phi_{\bm{\lambda}}(\bm{x}) \, \chi(\bm{\lambda}).
	\end{aligned}
\end{align}
Rewriting Theorem~\ref{theoremf3} in terms of rescaled functions we conclude that $U$ and $U^\dagger$ extend to unitary isomorphisms of the space $L^{\mathrm{sym}}_2(\mathbb{R}^n)$ (with trivial measure)
\begin{align} \label{UU}
	U^\dagger U = \mathrm{Id}, \qquad U U^\dagger = \mathrm{Id}.
\end{align}
Moreover, using bispectral duality~\eqref{bisp}, coupling reflection symmetry~\eqref{refl} and invariance of wave functions under rescaling of all parameters (see~\eqref{Psi} and~\eqref{S-hom})
\begin{align}
	\Psi_{\bm{\lambda}}(\bm{x}; g | \bm{\omega}) = \Psi_{\omega_1 \omega_2 \, \bm{\lambda}} \biggl( \frac{\bm{x}}{\omega_1 \omega_2}; \, \frac{g}{\omega_1 \omega_2} \, \bigg| \, \frac{\bm{\omega}}{\omega_1 \omega_2} \biggr),
\end{align}
one can check that
\begin{align}
	\Phi_{\bm{\lambda}}(\bm{x}) = \overline{ \Phi_{-\bm{x}}(\bm{\lambda}) }.
\end{align}
Then from definitions~\eqref{Udef} we have $U^\dagger = R U$, where $R$ is the reflection operator
\begin{align}
	[R \, \phi](\bm{x}) = \phi(-\bm{x}).
\end{align}
So, for rescaled transform the differences between functions of spatial and spectral variables disappear, and the formulas~\eqref{UU} are equivalent to the identity
\begin{align}
	U^2 = R.
\end{align}
This is analogous to well-known property of Fourier transform.

\subsection{Positive regularizations and Fatou's lemma} \label{sec:fatou}

In the previous work~\cite[Theorem~2]{BDKK3} we established unitarity of $F$ in regime I with the help of Fatou's lemma. The advantage of this approach is that one can work with compactly supported test functions $\phi(\bx)$ without requiring them to be analytic. In particular, by standard arguments~\cite[VIII.5]{KF} one can observe that in this case extension of the spectral transform to the corresponding $L_2$ space can be performed just using principal value integral.

On the other hand, Fatou's lemma can not be used outside of unitarity regimes, since it holds for nonnegative integrands only. Moreover, even in unitarity regimes one needs regularizing function~\eqref{R}
\begin{align}
	R_{\l, \epsilon}(\bm{\l})  = \exp \biggl( \pi  ( {g}^\ast - 2 \varepsilon ) \biggl[ n\l -\sum_{j = 1}^n \l_j \biggr] \biggr) \, \prod_{j = 1}^n \hat{K}(\l - \l_j),
\end{align}
which gives delta sequence~\eqref{f32}, to be nonnegative. Because this function is nonnegative in regime~II, the proof from~\cite{BDKK3} can be easily continued to this case as well. However, in regimes III, IV (complex coupling) the above function $R_{\l, \epsilon}(\bm{\l})$ is complex valued. 

A possible workaround is to choose different regularizing function, which will be nonnegative in regimes III, IV and such that the corresponding regularized scalar product can be again calculated explicitly. 

It turns out that we can simply take $| R_{\lambda, \epsilon}(\bl) |^2 \geq 0$. In regimes III, IV we have restriction $\bar{g} = g^*$, so that
\begin{align}
	| R_{\lambda, \epsilon}(\bl) |^2 = \exp \biggl( \pi  ( g + {g}^\ast - 4 \varepsilon ) \biggl[ n\l -\sum_{j = 1}^n \l_j \biggr] \biggr) \, \prod_{j = 1}^n \hat{K}(\l - \l_j) \, \hat{K}^*(\l - \l_j),
\end{align}
where two kernel functions from the right differ by reflection of the coupling $g \to g^*$
\begin{align}
	\hat{K}(\lambda) \equiv K(\lambda; \hat{g}^* | \hat{\bo} ), \qquad \hat{K}^*(\lambda) \equiv K(\lambda; \hat{g} | \hat{\bo} ).
\end{align}
Let us sketch the main steps of the unitarity proof with this regularization. The key idea is that the corresponding regularized scalar product
\begin{align}
	\bigl\langle \Psi_{\bm{\l}}(\by) ,\, \Psi_{\bm{\lambda}}(\bx) \bigr\rangle_{\hat{\Delta}}^{\l, \varepsilon}  = \int_{\mathbb{R}^n} d\bm{\l} \; \hat{\Delta}(\bm{\l}) \, \overline{\Psi_{\bm{\l}}(\bm{y})} \, \Psi_{\bm{\lambda}}(\bm{x})  \, | R_{\l, \epsilon}(\bm{\l}) |^2
\end{align}
can be reduced to the considered previously regularized pairing~\eqref{reg-pair}
\begin{align}
	\bigl( \Psi_{\bm{\l}}(\by) ,\, \Psi_{\bm{\lambda}}(\bx) \bigr)_{\hat{\mu}}^{\l, \varepsilon}  = \int_{\mathbb{R}^n} d\bm{\l} \; \hat{\mu}(\bm{\l}) \, \Psi_{-\bm{\l}}(\bm{y}) \, \Psi_{\bm{\lambda}}(\bm{x})  \, R_{\l, \epsilon}(\bm{\l})
\end{align}
with the help of Baxter $Q$-operator~\cite[Theorem~4]{BDKK4}
\begin{align}\label{Qdiag}
	Q^*(\lambda) \, \Psi_{\bl}(\bx) = \prod_{j = 1}^n \hat{K}^*(\lambda - \lambda_j) \, \Psi_{\bl}(\bx),
\end{align}
which is an integral operator acting on functions of $\bx$. Using Baxter operator~\eqref{Qdiag} and wave function properties after some calculations we obtain
\begin{align}
	\bigl\langle \Psi_{\bm{\l}}(\by) ,\, \Psi_{\bm{\lambda}}(\bx) \bigr\rangle_{\hat{\Delta}}^{\l, \varepsilon} = e^{\pi (g - 2\epsilon) n \lambda } \, \eta(\by) \, Q^*(\lambda) \, \bigl( \Psi_{\bm{\l}}(\tilde{\by}) ,\, \Psi_{\bm{\lambda}}(\bx) \bigr)_{\hat{\mu}}^{\l, \varepsilon} ,
\end{align}
where
\begin{align}
	\tilde{\bm{y}} = \bm{y} + \imath \biggl(\epsilon - \frac{g}{2} \biggr) \bm{e}, \qquad \bm{e} = (1, \dots, 1).
\end{align}
From that using the explicit formula for the regularized pairing~\eqref{f31} and Baxter operator~\cite[(1.29)]{BDKK4} we arrive at the expression for the scalar product
\begin{align}
	\begin{aligned}
		& \bigl\langle \Psi_{\bm{\l}}(\by) ,\, \Psi_{\bm{\lambda}}(\bx) \bigr\rangle_{\hat{\Delta}}^{\l, \varepsilon}  = (\omega_1 \omega_2)^{-n} \; \frac{\eta(\by)}{\eta(\bx)} \, \exp \biggl(2\pi \imath \, \lambda \sum_{j = 1}^n(x_j - y_j)\biggr) \\[6pt]
		& \qquad \times \int_{\mathbb{R}^n} d\bm{z} \; \Delta(\bm{z}) \, \prod_{j, k =1}^n K^*\biggl( x_j - z_k + \frac{\imath g}{2} - \imath \epsilon \biggr) \, K\biggl( y_j - z_k - \frac{\imath g^*}{2} + \imath \epsilon \biggr),
	\end{aligned}
\end{align}
where again kernel functions in the integrand differ by reflection $g \to g^*$
\begin{align}
	K(x) \equiv K(x; g | \bo), \qquad K^*(x) \equiv K(x; g^* | \bo).
\end{align}
In the limit $\epsilon \to 0^+$ the poles of the integrand pinch integration contours as long as $x_i = y_{j}$, so by careful analysis of residues and with the help of~\cite[Lemma~1]{BDKK3} one can prove that regularized scalar product is a delta sequence
\begin{align}\label{delta2}
	\lim_{\lambda \to \infty} \, \lim_{\epsilon \to 0^+} \bigl\langle \Psi_{\bm{\l}}(\by) ,\, \Psi_{\bm{\lambda}}(\bx) \bigr\rangle_{\hat{\Delta}}^{\l, \varepsilon} = \Delta^{-1}(\bx) \, \delta(\bx, \by).
\end{align} 
From this moment the proof of unitarity continues in the same fashion, as in~\cite{BDKK3}.

\setcounter{equation}{0}

\section*{Acknowledgments}
We are immensely grateful to S. Derkachov for collaboration and numerous helpful conversations which lead to the present work. We also appreciate G. Sarkissian, V. Spiridonov and Y. Lyubarskii for interesting discussions. S. Khoroshkin thanks Beijing Institute of Mathematical Sciences and Applications and the Technion for the kind hospitality. Sections 3 and 4 of this work were done within a research project implemented as part of the Basic Research Program at the HSE University.

\section*{Appendix}
\appendix

	\section{Double sine function}\label{AppendixA} 
	The double sine function $S_2(z) \equiv S_2(z|\omega_1, \omega_2)$, see \cite{Ku} and references therein, is a meromorphic function that satisfies two difference equations
	\beq\label{S2-diff}  \frac{S_2(z)}{S_2(z+\o_1)}=2\sin \frac{\pi z}{\o_2},\qquad \frac{S_2(z)}{S_2(z+\o_2)}=2\sin \frac{\pi z}{\o_1}
	\eeq
	and inversion relation
	\beq \label{S2-sin} S_2(z)S_2(-z)=-4\sin\frac{\pi z}{\o_1}\sin\frac{\pi z}{\o_2},\eeq
	or equivalently
	\beq\label{S2-refl} S_2(z)S_2(\o_1+\o_2-z)=1. \eeq
	This function can be expressed through the Barnes double gamma function $\Gamma_2(z|\bo)$ \cite{B}
	\beq
	S_2(z|\bo)=\Gamma_2(\o_1+\o_2-z|\bo) \, \Gamma_2^{-1}(z|\bo),
	\eeq
	and its properties follow from the corresponding properties of the double gamma function.
	It is also connected to the Ruijsenaars hyperbolic gamma function $G(z|\bo)$ \cite{R2}
	\beq \label{S-G}
	G(z|\bo) = S_2\Bigl(\imath z + \frac{\o_1 + \o_2}{2} \,\Big|\, \bo \Bigr)
	\eeq
	and to the Faddeev quantum dilogarithm $\gamma(z|\bo)$ \cite{F0}
	\beq 
	\gamma(z|\bo) = S_2\Bigl(-\imath z + \frac{\o_1+\o_2}{2}\, \Big|\, \bo\Bigr) \exp \biggl( \frac{\imath \pi}{2\o_1 \o_2} \biggl[z^2 + \frac{\o_1^2+\o_2^2}{12} \biggr]\biggr),
	\eeq
	which were investigated independently.

	The function $S_2(z)$ has poles and zeroes at the points
	\beq\label{S2-zeroes} z_{\mathrm{poles}} = \o_1 + \o_2 + m_1 \o_1 + m_2 \o_2, \qquad z_{\mathrm{zeroes}}=-m_1\o_1-m_2\o_2, \qquad m_i  \in \mathbb{N}_0.\eeq
	In the analytic region $ \Re z \in ( 0, \Re(\omega_1 + \omega_2) )$ we have the following integral representation for the logarithm of $S_2(z)$
	\begin{equation}\label{S2-int}
		\ln S_2 (z) = \int_0^\infty \frac{dt}{2t} \, \biggl( \frac{\sh \left[ (2z - \omega_1 - \omega_2)t \right]}{ \sh (\omega_1 t) \sh (\omega_2 t) } - \frac{ 2z - \omega_1 - \omega_2 }{ \omega_1 \omega_ 2 t } \biggr).
	\end{equation}
	It is clear from this representation that the double sine function is homogeneous
	\beq\label{S-hom}
	S_2( a z | a\o_1, a \o_2 ) = S_2(z|\o_1, \o_2), \qquad a \in (0, \infty)
	\eeq
	and invariant under permutation of periods
	\beq\label{A6}
	S_2(z| \o_1, \o_2) = S_2(z | \o_2, \o_1).
	\eeq
	Recall the multiple Bernoulli polynomial of the second order
	\begin{align}
		B_{2,2}(z | \bo) = \frac{1}{\omega_1 \omega_2} \biggl( \biggl[z - \frac{\omega_1 + \omega_2}{2} \biggr]^2 - \frac{\omega_1^2 + \omega_2^2}{12} \biggr).
	\end{align}
	Outside of cones with zeroes and poles~\eqref{S2-zeroes} double sine function has the following asymptotics
	\begin{align}\label{S2-asymp}
		S_2(z | \bo) \sim e^{\pm \frac{\pi \imath}{2} B_{2,2}(z | \bo)}, \qquad z \to \infty,
	\end{align}
	where the sign $+$ is taken for $z$ from the upper half plane and the sign $-$ for $z$ from the lower half plane.
	
	\section{Pairings} \label{AppendixB}
	
	The generating function of the Hamiltonians~\eqref{f1} can be written as the sum
	\begin{align}\label{H-HJ}
		H(\lambda) = e^{- n \pi \omega_1 \lambda} \sum_{J \subset \{1, \dots, n\}} \bigl(e^{2\pi \omega_1 \lambda} \bigr)^{n - |J|} \, H_J
	\end{align}
	of the difference operators
	\begin{align}
		H_J = \prod_{\substack{j \in J \\ k\not\in J}} \frac{\sh \frac{\pi}{\omega_2} (x_j - x_k - \imath g)}{ \sh \frac{\pi}{\omega_2} (x_j - x_k) } \, \prod_{j \in J} e^{ - \frac{\imath \omega_1}{2} \partial_{x_j} } \,  \prod_{k \not\in J} e^{ \frac{\imath \omega_1}{2} \partial_{x_k} }.
	\end{align}
	These operators are well defined on the functions $\phi(\bx)$ analytic in the strips
	\begin{align}\label{x-str3}
		| \Im x_j | \leq \frac{\Re \omega_1}{2}, \qquad j = 1, \dots, n.
	\end{align}
	Below we show that $H(\l)$ is symmetric with respect to the bilinear pairing
	\begin{align}\label{b-f}
		( \phi_1, \phi_2 )_\mu = \int_{\R^n} d\bx \; \mu(\bx) \, \phi_1(-\bx) \, \phi_2(\bx)
	\end{align}
	assuming analyticity and sufficient decay of the functions $\phi_i(\bx)$. Here the measure
	\begin{align}\label{mu2}
		\mu(\bx) = \frac{1}{n!} \prod_{\substack{j ,k = 1 \\ j \not=k}} \mu(x_j - x_k), \qquad \mu(x) = \frac{S_2(\imath x | \bo)}{S_2(\imath x + g | \bo)}.
	\end{align}
	The key ingredient of the proof is the difference equation
	\begin{align} \label{mu-eq}
		\frac{\mu(x - \imath \omega_1)}{\mu(x)} = \frac{\sh \frac{\pi (x - \imath g)}{\omega_2}}{\sh \frac{\pi x}{\omega_2}},
	\end{align}
	which in turn easily follows from the equation for the double sine function~\eqref{S2-diff}.
	
	\begin{lemma}\label{lemma:Hsym}
		Suppose $\phi_1(\bx), \phi_2(\bx)$ are analytic and admit the bound 
		\begin{align} \label{phi-b}
			| \phi_i(\bx) | \leq C \exp \biggl( - \pi \Re \hat{g} \sum_{1 \leq j < k \leq n} | \Re(x_j - x_k)| - \ve_i \sum_{j = 1}^n | \Re x_j | \biggr)
		\end{align}
		in the strips~\eqref{x-str3}, where $\ve_i \in \R$ are such that
		\begin{align}
			\ve_1 + \ve_2 > 0.
		\end{align}
		Then the operator $H(\l)$ is symmetric with respect to the bilinear pairing~\eqref{b-f} with these functions 
		\begin{equation}
			( \phi_1, \, H(\l) \, \phi_2)_\mu = ( H(\l) \, \phi_1, \, \phi_2)_\mu.
		\end{equation}
	\end{lemma}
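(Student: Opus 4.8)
The plan is to prove the symmetry term by term. Using the decomposition \eqref{H-HJ}, the coefficients $e^{-n\pi\o_1\l}(e^{2\pi\o_1\l})^{n-|J|}$ are scalars that factor out of the bilinear pairing, so by linearity it suffices to show
\[
( \phi_1, \, H_J \, \phi_2 )_\mu = ( H_J \, \phi_1, \, \phi_2 )_\mu
\]
for each $J\subset\{1,\dots,n\}$. Write the action as $[H_J\phi](\bx)=A_J(\bx)\,\phi(\tau_J\bx)$, where $A_J(\bx)=\prod_{j\in J,\,k\notin J}\sh\tfrac{\pi}{\o_2}(x_j-x_k-\imath g)/\sh\tfrac{\pi}{\o_2}(x_j-x_k)$ is the coefficient and $\tau_J$ is the half-shift $x_j\mapsto x_j-\imath\o_1/2$ for $j\in J$, $x_k\mapsto x_k+\imath\o_1/2$ for $k\notin J$. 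First I would insert $[H_J\phi_2](\bx)=A_J(\bx)\,\phi_2(\tau_J\bx)$ into the left side of \eqref{b-f} and shift the integration contour in $\C^n$ by $\tau_J^{-1}$, i.e.\ $x_j\mapsto x_j+\imath\o_1/2$ ($j\in J$), $x_k\mapsto x_k-\imath\o_1/2$ ($k\notin J$). This returns the argument of $\phi_2$ to the real contour and, since $-\tau_J^{-1}\bx=\tau_J(-\bx)$, turns $\phi_1(-\bx)$ into $\phi_1(\tau_J(-\bx))$, which is exactly the factor that appears in $[H_J\phi_1](-\bx)$ on the right side.

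What then remains is the purely algebraic identity
\[
\mu(\tau_J^{-1}\bx)\,A_J(\tau_J^{-1}\bx) = \mu(\bx)\,A_J(-\bx),
\]
and here the difference equation \eqref{mu-eq} does all the work. Under $\tau_J^{-1}$ only the mixed differences $x_j-x_k$ with $j\in J,\,k\notin J$ are displaced, by $\pm\imath\o_1$. Applying \eqref{mu-eq} to the downward-displaced factors produces precisely the product of hyperbolic sines assembling $A_J(-\bx)$, while applying it to the upward-displaced factors produces the reciprocal of $A_J(\tau_J^{-1}\bx)$; multiplying these two contributions gives the claimed identity. Substituting it back reproduces $\int_{\R^n}d\bx\,\mu(\bx)\,[H_J\phi_1](-\bx)\,\phi_2(\bx)=(H_J\phi_1,\phi_2)_\mu$, which closes the argument.

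The main obstacle is justifying the contour shift, and this rests on two points. \emph{Analyticity in the swept strips.} The functions $\phi_i$ are analytic in \eqref{x-str3} by hypothesis; the danger comes from the poles of $\mu(x_j-x_k)$, which by \eqref{S2-zeroes} lie at $x_j-x_k=\imath(g+m_1\o_1+m_2\o_2)$ in the upper half plane, the lowest at $\Im=\Re g$ and hence possibly inside the region $\Im(x_j-x_k)\in[0,\Re\o_1]$ crossed by the upward displacement. The key observation is that for the upward-displaced factors ($j\in J,\,k\notin J$) these poles are cancelled by the zeros of the numerator $\sh\tfrac{\pi}{\o_2}(x_j-x_k-\imath g)$ of $A_J$, so the product $\mu\,A_J$ stays regular across the shift; the downward-displaced factors have their nearest poles only at $\Im=-\Re(\o_1+\o_2)$, safely below. \emph{Vanishing of the vertical segments.} On the connecting pieces at $\Re x_a=\pm R$ the estimate \eqref{mu-bound} gives $|\mu(\bx)|\lesssim\exp\bigl(2\pi\Re\hat g\sum_{j<k}|\Re(x_j-x_k)|\bigr)$, $A_J$ is bounded, and by \eqref{phi-b} the two factors $\phi_1,\phi_2$ each contribute a matching $-\pi\Re\hat g\sum_{j<k}|\Re(x_j-x_k)|$. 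The $\hat g$-growth cancels exactly, leaving decay $\exp\bigl(-(\ve_1+\ve_2)\sum_j|\Re x_j|\bigr)$, so the hypothesis $\ve_1+\ve_2>0$ is precisely what forces these segments to zero as $R\to\infty$. Verifying the pole cancellation cleanly for all $J$ simultaneously is the delicate part; everything else is bookkeeping.
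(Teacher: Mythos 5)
Your proposal is correct and follows essentially the same route as the paper's proof: the term-by-term reduction to the operators $H_J$, the contour shift by the half-period vector, and the algebraic identity your \eqref{mu-eq}-computation produces is exactly the paper's relation \eqref{mu-eq2}, with convergence of the shifted integrals secured by the cancellation of the $\hat g$-growth between $\mu$ and $\phi_1\phi_2$ and the hypothesis $\ve_1+\ve_2>0$. The one point the paper treats that you gloss over: besides the poles of $\mu$ at $x_j-x_k=\imath(g+m_2\omega_2)$ being cancelled by the numerator zeros of $A_J$, the poles of $A_J$ itself coming from its denominators $\sh\frac{\pi}{\omega_2}(x_j-x_k)$ (at $x_j-x_k=\imath m\omega_2$, in particular on the diagonal $x_j=x_k$) must be cancelled by the zeros of the measure, which also means your claim that ``$A_J$ is bounded'' on the connecting segments is imprecise --- what is actually bounded is the product $\mu\,A_J$, as in the paper's estimate \eqref{mu-sin-b}.
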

	
	\begin{proof}
		Since $H(\l)$ splits into the sum of operators $H_J$~\eqref{H-HJ}, it is enough to prove the identity
		\begin{equation}\label{HJ-sym}
			( \phi_1, \, H_J \, \phi_2)_\mu = ( H_J \, \phi_1, \, \phi_2)_\mu
		\end{equation}
		for any $J \subset \{1, \dots, n\}$. Denote by $\bm{\xi}^J$ the vector with components
		\begin{align}\label{xi}
			j \in J \colon \quad \xi^J_j = \frac{\imath \omega_1}{2}, \qquad\quad k \not\in J \colon \quad  \xi^J_k = - \frac{\imath \omega_1}{2}.
		\end{align}
		Then the integrals from the identity~\eqref{HJ-sym} can be written in the following way
		\begin{align}\label{HJ-int}
			& ( \phi_1, \, H_J \, \phi_2)_\mu = \int_{\mathbb{R}^n} d\bm{x} \; \mu(\bm{x}) \, \phi_1(-\bm{x}) \,  \prod_{\substack{j \in J \\ k\not\in J}} \frac{\sh \frac{\pi}{\omega_2} (x_j - x_k - \imath g)}{ \sh \frac{\pi}{\omega_2}(x_j - x_k) } \, \phi_2(\bm{x} - \bm{\xi}^J), \\[6pt] \label{HJ-int2}
			& ( H_J \, \phi_1, \, \phi_2)_\mu = \int_{\mathbb{R}^n} d\bm{y} \; \mu(\bm{y}) \,  \prod_{\substack{j \in J \\ k\not\in J}} \frac{\sh \frac{\pi}{\omega_2} (y_j - y_k + \imath g)}{ \sh \frac{\pi}{\omega_2}(y_j - y_k) } \,  \phi_1(-\bm{y} - \bm{\xi}^J) \, \phi_2(\bm{y}).
		\end{align}
		To pass from the first integral to the second one we shift contours
		\begin{align}\label{x-shift}
			x_i \in \mathbb{R} \quad \to \quad x_i \in \mathbb{R} + \xi^J_i, \qquad i = 1, \dots, n,
		\end{align}
		change the integration variable
		\begin{align}
			\bm{x} = \bm{y} + \bm{\xi}^J 
		\end{align}
		and use the relation for the measure function
		\begin{align}\label{mu-eq2}
			\mu(\bm{y} + \bm{\xi}^J) \,  \prod_{\substack{j \in J \\ k\not\in J}} \frac{\sh \frac{\pi}{\omega_2} (y_j - y_k + \imath \omega_1  - \imath g)}{ \sh \frac{\pi}{\omega_2}(y_j - y_k + \imath \omega_1) } = \mu(\bm{y}) \,  \prod_{\substack{j \in J \\ k\not\in J}} \frac{\sh \frac{\pi}{\omega_2} (y_j - y_k + \imath g)}{ \sh \frac{\pi}{\omega_2}(y_j - y_k) } .
		\end{align}
		The last formula follows from the difference equation for the measure function of one variable~\eqref{mu-eq}.
		
		It is left to justify that all appearing integrals converge and we can shift contours in the prescribed way~\eqref{x-shift}. For this it is sufficient to show that the integrand in~\eqref{HJ-int}
		\begin{align} \label{HJ-intd}
			\mu(\bm{x}) \, \phi_1(-\bm{x}) \,  \prod_{\substack{j \in J \\ k\not\in J}} \frac{\sh \frac{\pi}{\omega_2} (x_j - x_k - \imath g)}{ \sh \frac{\pi}{\omega_2}(x_j - x_k) } \, \phi_2(\bm{x} - \bm{\xi}^J)
		\end{align}
		is analytic and absolutely integrable function in the strips
		\begin{align}\label{jk-strip}
			\begin{aligned}
				& j, j' \in J \colon  && \qquad 0 \leq \Im x_j  \leq \frac{\Re \omega_1}{2}, &&\qquad  | \Im (x_j - x_{j'}) | < \delta, \\[6pt]
				& k, k' \not\in J \colon &&\qquad  -\frac{\Re \omega_1}{2} \leq \Im x_k  \leq 0, &&\qquad  | \Im(x_k - x_{k'}) | < \delta
			\end{aligned}
		\end{align}
		for some small $\delta > 0$. First, notice that the functions $\phi_1(-\bx)$ and $\phi_2(\bx - \bm{\xi}^J)$ are analytic in the domain \eqref{jk-strip} by assumptions of this lemma. The measure function 
		\begin{align}
			\mu(\bm{x}) = \frac{1}{n!} \prod_{1 \leq j < k \leq n} 4 \sh \frac{\pi (x_j - x_k)}{\omega_1} \sh \frac{\pi (x_j - x_k)}{\omega_2} \, \prod_{\substack{j ,k = 1 \\ j \not=k}}^n S_2^{-1}(\imath x_j - \imath x_k + g)
		\end{align}
		has poles at the points
		\begin{align}
			x_j - x_k = \imath (g + m_1 \omega_1 + m_2 \omega_2), \qquad m_1, m_2 \in \mathbb{N}_0, \qquad j, k =1, \dots, n,
		\end{align}
		see \eqref{S2-zeroes}. For $j \in J$ and $k\not\in J$ all poles with $m_1 = 0$ cancel with the zeroes of the hyperbolic sines in~\eqref{HJ-intd}. And vice versa, the poles from the hyperbolic sines in denominators~\eqref{HJ-intd} cancel with the zeroes of the measure function. Altogether the function
		\begin{align}
			\mu(\bm{x}) \, \prod_{\substack{j \in J \\ k\not\in J}} \frac{\sh \frac{\pi}{\omega_2} (x_j - x_k - \imath g)}{ \sh \frac{\pi}{\omega_2}(x_j - x_k) } 
		\end{align}
		is analytic in the needed domain \eqref{jk-strip}. 
		
		Now let us show that the function~\eqref{HJ-intd} is also absolutely integrable for any fixed $\Im \bm{x}$ in the strips~\eqref{jk-strip}. In this domain the measure function together with the hyperbolic sines admits the bound
		\begin{align} \label{mu-sin-b}
			\Biggl|  \mu(\bm{x})  \prod_{\substack{j \in J \\ k\not\in J}} \frac{\sh \frac{\pi}{\omega_2} (x_j - x_k - \imath g)}{ \sh \frac{\pi}{\omega_2}(x_j - x_k) }  \Biggr| \leq C  \exp \biggl( 2 \pi \Re \hat{g} \sum_{1 \leq j < k \leq n} | \Re(x_j - x_k) |\biggr).
		\end{align} 
		This follows from the inequality \eqref{eta-bound} and simple estimate for the hyperbolic sines
		\begin{align}
			\Bigl| \sh \frac{\pi}{\omega_2} (x_j - x_k - \imath g) \Bigr| \leq C \exp \Bigl( \pi \Re \hat{\omega}_1 \, | \Re(x_j - x_k)|
			\Bigr).
		\end{align}
		By lemma assumptions~\eqref{phi-b} the rest of the integrand is also exponentially bounded
		\begin{multline} \label{phi2-b}
			| \phi_1(-\bx) \, \phi_2(\bx - \bm{\xi}^J) | \\[6pt]
			\leq C \exp \biggl( - 2 \pi \Re \hat{g} \sum_{1 \leq j < k \leq n } | \Re(x_j - x_k) | - (\ve_1 + \ve_2) \sum_{j = 1}^n | \Re x_j |\biggr),
		\end{multline}
		where $\ve_1 + \ve_2 > 0$. Combining~\eqref{mu-sin-b} and~\eqref{phi2-b} we obtain absolutely integrable bound for the integrand~\eqref{HJ-intd}.
	\end{proof}
	
	In Section~\ref{sec:un} we introduce four unitarity regimes for the parameters
	\begin{align*} 
		&\mathrm{I} \colon && \hspace{-2cm} \o_1,\o_2 > 0, && \hspace{-1cm} 0 < g < \omega_1 + \omega_2,\\[10pt]
		&\mathrm{II} \colon && \hspace{-2cm} \! \Re \omega_1 > 0, \; \omega_2 = \bar{\omega}_1, &&  \hspace{-1cm} 0 < g < \omega_1 + \omega_2,\\[8pt]
		&\mathrm{III} \colon && \hspace{-2cm} \o_1,\o_2 > 0,&& \hspace{-1cm} \Re g = \frac{\omega_1 + \omega_2}{2},\\[6pt]
		&\mathrm{IV} \colon && \hspace{-2cm} \! \Re \omega_1 > 0, \; \omega_2 = \bar{\omega}_1, && \hspace{-1cm} \Re g = \frac{\omega_1 + \omega_2}{2}
	\end{align*}
	and consider two sesquilinear pairings 
	\begin{align}\label{sp}
		\begin{aligned}
			& \langle \phi_1, \phi_2 \rangle_\mu = \int_{\mathbb{R}^n} d\bx \; \mu(\bx) \, \overline{\phi_1(\bx)} \, \phi_2(\bx),	\\[6pt]
			& \langle \phi_1, \phi_2 \rangle_\Delta = \int_{\mathbb{R}^n} d\bx \; \Delta(\bx) \, \overline{\phi_1(\bx)} \, \phi_2(\bx),
		\end{aligned}
	\end{align}
	where the first measure is the same, as in bilinear pairing above~\eqref{mu2}, while the second one is given by
	\begin{align}
 		\Delta(\bm{x}) = \frac{1}{n!} \prod_{1 \leq j < k \leq n} 4  \sh \frac{ \pi (x_j - x_k)}{{\omega}_1}\sh \frac{ \pi(x_j - x_k)}{{\omega}_2}.
	\end{align}
	Notice that the measures are symmetric in $\omega_1, \omega_2$, whereas the generating function of Hamiltonians $H(\lambda) \equiv H(\lambda| \omega_1, \omega_2)$~\eqref{H-HJ} is not. 
	
	\begin{lemma}\label{lemma:Hsym2}
		Assume the same conditions on $\phi_1, \phi_2$, as in Lemma~\ref{lemma:Hsym}. Then in the unitarity regimes I--IV the operator $H(\lambda | \omega_1, \omega_2)$ has the following symmetry properties with respect to the sesquilinear pairings~\eqref{sp}
		\begin{align}
			&\mathrm{I} \colon && \hspace{-2cm} \bigl\langle \phi_1, \, H(\lambda| \omega_1 ,\omega_2) \, \phi_2 \bigr\rangle_\mu =  \bigl\langle H(\lambda| \omega_1 ,\omega_2) \, \phi_1, \, \phi_2 \bigr\rangle_\mu, \\[6pt]
			&\mathrm{II} \colon && \hspace{-2cm} \bigl\langle \phi_1, \, H(\lambda| \omega_1 ,\omega_2) \, \phi_2 \bigr\rangle_\mu =  \bigl\langle H(\lambda| \omega_2 ,\omega_1) \, \phi_1, \, \phi_2 \bigr\rangle_\mu, \\[6pt]
			&\mathrm{III} \colon && \hspace{-2cm} \bigl\langle \phi_1, \, H(\lambda| \omega_1 ,\omega_2) \, \phi_2 \bigr\rangle_\Delta =  \bigl\langle H(\lambda| \omega_1 ,\omega_2) \, \phi_1, \, \phi_2 \bigr\rangle_\Delta, \\[6pt]
			&\mathrm{IV} \colon && \hspace{-2cm} \bigl\langle \phi_1, \, H(\lambda| \omega_1 ,\omega_2) \, \phi_2 \bigr\rangle_\Delta =  \bigl\langle H(\lambda| \omega_2 ,\omega_1) \, \phi_1, \, \phi_2 \bigr\rangle_\Delta.
		\end{align}
	\end{lemma}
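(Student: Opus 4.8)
The plan is to reduce all four sesquilinear identities to the bilinear symmetry already established in Lemma~\ref{lemma:Hsym} (for the measure $\mu$) and to its $\Delta$-twisted form~\eqref{f26a}, by means of the antilinear involution
\[
	\phi^{\star}(\bx) = \overline{\phi(-\bx)} .
\]
For any weight $w(\bx)$ one has the elementary identity $\langle \phi_1, \phi_2\rangle_w = (\phi_1^{\star}, \phi_2)_w$, since $\phi_1^{\star}(-\bx) = \overline{\phi_1(\bx)}$; as $\star$ is an involution, the same relation reads $(h,\phi_2)_w = \langle h^{\star},\phi_2\rangle_w$. Moreover $\phi^{\star}$ stays in the admissible class: its analytic continuation $\phi^{\star}(\bz) = \overline{\phi(-\bar\bz)}$ is analytic in the same strips~\eqref{x-str3} (which are invariant under $\bz \mapsto -\bar\bz$) and obeys the same bound~\eqref{phi-b} (invariant under $\Re x_j \mapsto -\Re x_j$). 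Thus $\phi_1^{\star}$, and likewise $\eta^{-1}\phi_1^{\star}$, satisfy the hypotheses of Lemma~\ref{lemma:Hsym} and of~\eqref{f26a}. Everything then hinges on a single operator identity: that conjugation by $\star=CR$, where $C\phi(\bz)=\overline{\phi(\bar\bz)}$ and $R\phi(\bz)=\phi(-\bz)$, turns $H(\lambda)$ back into a generating function of the same family, with possibly interchanged periods or reflected coupling.

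For regimes I and II I would write $\langle \phi_1, H(\lambda|\o_1,\o_2)\phi_2\rangle_\mu = (\phi_1^{\star}, H(\lambda|\o_1,\o_2)\phi_2)_\mu$, apply Lemma~\ref{lemma:Hsym} to move $H$ onto $\phi_1^{\star}$, and convert back, obtaining $\langle (\star\, H(\lambda|\o_1,\o_2)\,\star)\,\phi_1, \phi_2\rangle_\mu$. It then remains to compute $\star H_J \star$ termwise. Writing $H_J\phi(\bx) = c_J(\bx)\,\phi(\bx - \bm{\xi}^J)$ with the coefficient $c_J$ from~\eqref{f19} and the imaginary half-shift $\bm{\xi}^J$ from~\eqref{xi}, conjugation by $C$ replaces $\o_i$ by $\bar\o_i$ and $g$ by $\bar g$ in $c_J$ and conjugates the shift, while $R$ uses oddness of $\sh$ to flip $\bx$. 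In regime~I ($\o_i,g$ real) the two operations cancel and $\star H_J\star = H_J$, giving the stated self-adjointness. In regime~II ($\bar\o_1 = \o_2$, $g$ real) conjugation additionally swaps $\o_1 \leftrightarrow \o_2$ in both $c_J$ and the shift (and in the scalar prefactors of~\eqref{f18}), so $\star H(\lambda|\o_1,\o_2)\star = H(\lambda|\o_2,\o_1)$, which is exactly relation~II.

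For regimes III and IV I would use the $\Delta$-pairing. Applying~\eqref{f26a} with first argument $\eta^{-1}\phi_1^{\star}$ and second argument $\phi_2$, and using the intertwining relation~\eqref{f22}, $\eta\, H(\lambda;g)\,\eta^{-1} = H(\lambda;g^*)$, together with the evenness of $\eta$, one gets $\langle \phi_1, H(\lambda;g)\phi_2\rangle_\Delta = (H(\lambda;g^*)\phi_1^{\star}, \phi_2)_\Delta = \langle (\star\, H(\lambda;g^*)\,\star)\,\phi_1, \phi_2\rangle_\Delta$. The same termwise computation as above now applies to $H(\lambda;g^*)$: since here $\bar g = g^*$, i.e. $\overline{g^*} = g$, conjugation by $C$ reflects the coupling $g^* \to g$ in each $c_J$. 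In regime~III ($\o_i$ real) this yields $\star H(\lambda;g^*|\o_1,\o_2)\star = H(\lambda;g|\o_1,\o_2)$, and in regime~IV ($\bar\o_1 = \o_2$) it additionally swaps the periods, giving $H(\lambda;g|\o_2,\o_1)$ --- precisely relations III and IV.

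The main obstacle is the careful bookkeeping in the operator identity $\star H\star = H'$: one must track how complex conjugation acts simultaneously on the multiplicative coefficients $c_J$, on the purely imaginary half-shifts $\pm\imath\o_1/2$, and on the scalar prefactors, and verify that the period interchange and coupling reflection emerge with the correct signs from the oddness of the hyperbolic sines. A secondary technical point is to confirm that $\eta^{-1}\phi_1^{\star}$ and $\phi_2$ genuinely meet the analyticity and decay requirements of~\eqref{f26a} in regimes III, IV; this is routine from~\eqref{eta-bound} and~\eqref{phi-b}, except at the boundary locus $\Re g = \Re\o_1 = \Re\o_2$, where the poles of $\eta$ reach the edge of the strip and one argues by continuity in the parameters exactly as in part~3 of the proof of Proposition~\ref{propositionf2}.
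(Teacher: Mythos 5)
Your reduction of the sesquilinear identities to the bilinear ones via the antilinear involution $\phi^\star(\bx)=\overline{\phi(-\bx)}$ is a genuinely different route from the paper's. The paper does not reuse Lemma~\ref{lemma:Hsym} as a black box: it reruns the contour-shift argument directly for the sesquilinear pairings, using the difference equation~\eqref{mu-eq2} for $\mu$ in regimes I, II and an analogous difference equation for $\Delta$ in regimes III, IV; the conjugation conditions ($g\in\R$ resp.\ $\bar g=g^*$, periods real resp.\ $\bar\omega_1=\omega_2$) are exactly what makes the shifted coefficient match the complex conjugate of the coefficient of~\eqref{f19}. Your regime I/II argument is complete: $\phi_1^\star$ stays in the admissible class, and the termwise identity $\star\, H_J(\omega_1,\omega_2;g)\,\star = H_J(\bar\omega_1,\bar\omega_2;\bar g)$, together with conjugation of the scalar prefactors in~\eqref{f18}, gives precisely statements I and II. What your approach buys is reuse of already established bilinear results; what the paper's direct approach buys is uniformity in regimes III, IV: since the sine factors of $\Delta$~\eqref{Skl} cancel the denominators of~\eqref{f19}, the integrand there is entire, and no analyticity problem can arise anywhere in those regimes.

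That is exactly where your regime III/IV argument has a real, though fixable, soft spot. Routing through~\eqref{f24}/\eqref{f26a} requires analyticity of $\eta(\bx)\,\phi_2(\bx)$ in the closed strips, which fails precisely when $\Re g=\Re\omega_1$, because the poles $x_j-x_k=\imath g$ of $\eta$ then sit at the corners of the strips. You treat this as an exceptional ``boundary locus'', but in regime IV it is the \emph{entire} regime: $\bar\omega_1=\omega_2$ and $\bar g=g^*$ force $\Re g=\Re\omega_1=\Re\omega_2$ identically (within regime III it is the sub-case $\omega_1=\omega_2$). So the continuity-in-parameters argument is not a patch for a degenerate case; it has to carry all of regime IV. Moreover, continuity cannot be applied to the sesquilinear statement itself: once the parameters leave the unitarity locus, statement IV is no longer available, since the conjugation relations it encodes break down. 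The continuity must instead be applied to the purely bilinear identity your chain passes through,
\begin{equation*}
	\bigl(\psi_1,\, H(\l;g)\,\phi_2\bigr)_\Delta=\bigl(H(\l;g^*)\,\psi_1,\,\phi_2\bigr)_\Delta,
\end{equation*}
which is~\eqref{f24} rewritten with $\psi_1=\eta\,\vf_1$; this identity makes sense for all parameters obeying~\eqref{o-cond},~\eqref{g-cond}, holds for $\Re g>\Re\omega_1$, and both of its sides are continuous in $(\bo,g)$ near the locus (uniform convergence as in part~3 of the proof of Proposition~\ref{propositionf2}). Only after extending this bilinear identity to $\Re g=\Re\omega_1=\Re\omega_2$ do your two antilinear conversion steps, which are valid verbatim on the locus, yield statements III and IV. With the argument restructured in this way, your proof is correct.
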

	
	\begin{proof}
		The arguments are the same, as in the proof of Lemma~\ref{lemma:Hsym}. For the regimes I, II one uses difference equation on the measure $\mu(\by)$~\eqref{mu-eq2}, while in regimes III, IV there is analogous equation on the measure $\Delta(\by)$
		\begin{align}
			\Delta(\bm{y} + \bm{\xi}^J) \,  \prod_{\substack{j \in J \\ k\not\in J}} \frac{\sh \frac{\pi}{\omega_2} (y_j - y_k + \imath \omega_1  - \imath g)}{ \sh \frac{\pi}{\omega_2}(y_j - y_k + \imath \omega_1) } = \Delta(\bm{y}) \,  \prod_{\substack{j \in J \\ k\not\in J}} \frac{\sh \frac{\pi}{\omega_2} (y_j - y_k + \imath g^*)}{ \sh \frac{\pi}{\omega_2}(y_j - y_k) } ,
		\end{align}
		where vector $\bm{\xi}^J$ is defined in~\eqref{xi}. 
	\end{proof}


\begin{thebibliography}{99}
		
		\bibitem[AAR]{AAR} G. E. Andrews, R. Askey, R. Roy, \textit{Special Functions}, Cambridge University Press (1999).
		
		\bibitem[AOS]{AOS} H. Awata, S. Odake, J. Shiraishi, \textit{Integral representations of the Macdonald symmetric polynomials}, \href{https://doi.org/10.1007/BF02100101}{Communications in Mathematical Physics} \textbf{179} (1996) 647--666, \href{https://doi.org/10.48550/arXiv.q-alg/9506006}{\tt arXiv:q-alg/9506006}.
		
		\bibitem[B]{B} E. W. Barnes, \textit{The theory of the double gamma function}, \href{https://www.jstor.org/stable/90809}{Philosophical Transactions of the Royal Society of London}. Series A, Containing Papers of a Mathematical or Physical Character \textbf{196} (1901) 265--387.
		
		\bibitem[BDKK1]{BDKK1} N. Belousov, S. Derkachov, S. Kharchev, S. Khoroshkin, \textit{Baxter operators in Ruijsenaars hyperbolic system I: commutativity of $Q$-operators}, \href{https://doi.org/10.1007/s00023-023-01364-4}{Annales Henri Poincaré} \textbf{25} (2024) 3207--3258, \href{https://doi.org/10.48550/arXiv.2303.06383}{\tt arXiv:2303.06383 [math-ph]}.
		
		\bibitem[BDKK2]{BDKK2} N. Belousov, S. Derkachov, S. Kharchev, S. Khoroshkin, \textit{Baxter operators in Ruijsenaars hyperbolic system II: bispectral wave functions}, \href{https://doi.org/10.1007/s00023-023-01385-z}{Annales Henri Poincaré} \textbf{25} (2024) 3259--3296, \href{https://doi.org/10.48550/arXiv.2303.06382}{\tt arXiv:2303.06382 [math-ph]}.
		
		\bibitem[BDKK3]{BDKK3} N. Belousov, S. Derkachov, S. Kharchev, S. Khoroshkin, \textit{Baxter operators in Ruijsenaars hyperbolic System III: orthogonality and completeness of wave functions}, \href{https://doi.org/10.1007/s00023-023-01406-x}{Annales Henri Poincaré} \textbf{25} (2024) 3297--3332, \href{https://doi.org/10.48550/arXiv.2307.16817}{\tt arXiv:2307.16817 [math-ph]}.
		
		\bibitem[BDKK4]{BDKK4} N. Belousov, S. Derkachov, S. Kharchev, S. Khoroshkin, \textit{Baxter operators in Ruijsenaars hyperbolic system IV: coupling constant reflection symmetry}, \href{https://doi.org/10.1007/s00220-024-04952-5}{Communications in Mathematical Physics} \textbf{405} (2024), \href{https://doi.org/10.48550/arXiv.2308.07619}{\tt arXiv:2308.07619 [math-ph]}.
		
		\bibitem[BKK]{BKK} M. Bullimore, H. C. Kim, P. Koroteev, \textit{Defects and quantum Seiberg-Witten geometry}, \href{https://doi.org/10.1007/JHEP05(2015)095}{Journal of High Energy Physics} \textbf{2015} (2015) 95, \href{https://doi.org/10.48550/arXiv.1412.6081}{\tt arXiv:1412.6081 [hep-th]}.
		
		\bibitem[BT]{BT} A. G. Bytsko, J. Teschner, \textit{R-operator, co-product and Haar-measure for the modular double of $U_q(\mathfrak{sl}(2,\mathbb{R}))$}, \href{https://doi.org/10.1007/s00220-003-0894-5}{Communications in mathematical physics} \textbf{240} (2003) 171--196, \href{https://doi.org/10.48550/arXiv.math/0208191}{\tt arXiv:math/0208191 [math.QA]}.
				
		\bibitem[DKKSS]{DKKSS} P. Di Francesco, R. Kedem, S. Khoroshkin, G. Schrader, A. Shapiro, \textit{Ruijsenaars wavefunctions as modular group matrix coefficients}, \href{https://doi.org/10.48550/arXiv.2402.14214}{\tt arXiv:2402.14214 [math-ph]} (2024).
		
		\bibitem[DKM1]{DKM1} S. E. Derkachov, K. K. Kozlowski, A. N. Manashov, On the separation of variables for the modular XXZ magnet and the lattice Sinh-Gordon models, \href{https://doi.org/10.1007/s00023-019-00806-2}{Annales Henri Poincaré} \textbf{20} (2019) 2623--2670, \href{https://doi.org/10.48550/arXiv.1806.04487}{\tt arXiv:1806.04487 [math-ph]}.
		
		\bibitem[DKM2]{DKM} S. E. Derkachov, K. K. Kozlowski, A. N. Manashov, \textit{Completeness of SoV Representation for $SL(2,\mathbb{R})$ Spin Chains}, \href{https://doi.org/10.3842/SIGMA.2021.063}{SIGMA} \textbf{17} (2021) 063, \href{https://doi.org/10.48550/arXiv.2102.13570}{\tt arXiv:2102.13570 [math-ph]}.
		
		\bibitem[F1]{F0} L. D. Faddeev, \textit{Discrete Heisenberg-Weyl Group and modular group}, \href{https://doi.org/10.1007/BF01872779}{Letters in Mathematical Physics} \textbf{34} (1995) 249--254, \href{https://doi.org/10.48550/arXiv.hep-th/9504111}{\tt arXiv:hep-th/9504111}.
		
		\bibitem[F2]{F1} L. D. Faddeev, \textit{Modular double of quantum group}, Mathematical Physics Studies \textbf{21} (2000), \href{https://doi.org/10.48550/arXiv.math/9912078}{\tt arXiv:math/9912078 [math.QA]}.
		
		\bibitem[F3]{F2} L. D. Faddeev, \textit{Discrete series of representations for the modular double of $U_q (\mathrm{sl} (2, \mathbb{R}))$}, \href{https://doi.org/10.48550/arXiv.0712.2747}{\tt arXiv:0712.2747 [math.QA]} (2007).
		
		\bibitem[HR1]{HR1} M. Halln\"as, S. Ruijsenaars, \textit{Joint eigenfunctions for the relativistic Calogero--Moser Hamiltonians of hyperbolic type. I. First steps}, \href{https://doi.org/10.1093/imrn/rnt076}{International Mathematics Research Notices} \textbf{2014}:16 (2014) 4400--4456, \href{https://doi.org/10.48550/arXiv.1206.3787}{\tt arXiv:1206.3787 [nlin.SI]}.
		
		\bibitem[HR2]{HR2} M. Halln\"as, S. Ruijsenaars, \textit{Joint eigenfunctions for the relativistic Calogero--Moser Hamiltonians of hyperbolic type. II. The two- and three-variable cases}, \href{https://doi.org/10.1093/imrn/rnx020}{International Mathematics Research Notices} \textbf{2018}:14 (2018) 4404--4449, \href{https://doi.org/10.48550/arXiv.1607.06672}{\tt arXiv:1607.06672 [math-ph]}.
		
		\bibitem[HR3]{HR3} M. Halln\"as, S. Ruijsenaars, \textit{Joint eigenfunctions for the relativistic Calogero--Moser Hamiltonians of hyperbolic type. III. Factorized asymptotics}, \href{https://doi.org/10.1093/imrn/rnaa193}{International Mathematics Research Notices} \textbf{2021}:6 (2021) 4679--4708, \href{https://doi.org/10.48550/arXiv.1905.12918}{\tt arXiv:1905.12918 [math-ph]}.
		
		\bibitem[KF]{KF} A. N. Kolmogorov, S. V. Fomin, \textit{Elements of the theory of functions and functional analysis} (Russian), Forth edition, Moscow, Nauka (1976).
		
		\bibitem[KK]{Ku} N. Kurokawa, S-Y. Koyama, \textit{Multiple sine functions}, \href{https://doi.org/10.1515/form.2003.042}{Forum Mathematicum} \textbf{15} (2003) 839--876.		
		
		\bibitem[KLS]{KLS} S. Kharchev, D. Lebedev, M. Semenov-Tian-Shansky, \textit{Unitary representations of $U_q(\mathfrak{sl}(2, \mathbb{R}))$, the modular double and the multiparticle $q$-deformed Toda chain}, \href{https://doi.org/10.1007/s002200100592}{Communications in mathematical physics} \textbf{225} (2002) 573--609, \href{https://doi.org/10.48550/arXiv.hep-th/0102180}{\tt arXiv:hep-th/0102180}.
		
		\bibitem[M]{M} I. Macdonald, \textit{Symmetric functions and Hall Polynomials}, Second edition, Oxford, Oxford University Press (1995).
		
		\bibitem[O]{O} E. M. Opdam, \textit{Harmonic analysis for certain representations of graded Hecke algebras}, \href{https://doi.org/10.1007/BF02392487}{Acta Mathematica} \textbf{175} (1995) 75--121.
		
		\bibitem[R1]{R1} S. N. M. Ruijsenaars, \textit{Complete integrability of relativistic Calogero-Moser systems and elliptic function identities}, \href{https://doi.org/10.1007/BF01207363}{Communications in Mathematical Physics} \textbf{110} (1987) 191--213.
		
		\bibitem[R2]{R2} S. N. M. Ruijsenaars, \textit{First-order analytic difference equations and integrable quantum systems}, \href{https://doi.org/10.1063/1.531809}{Journal of Mathematical Physics} \textbf{38} (1997) 1069--1146.
		
		\bibitem[R3]{R3} S. N. M. Ruijsenaars, A relativistic conical function and its Whittaker limits, \href{https://doi.org/10.3842/SIGMA.2011.101}{SIGMA} \textbf{7} (2011) 101, \href{https://doi.org/10.48550/arXiv.1111.0115}{\tt arXiv:1111.0115 [math.CA]}.
					
		\bibitem[SS]{SS} G. A. Sarkissian, V. P. Spiridonov, \textit{Complex hypergeometric functions and integrable many-body problems}, \href{https://doi.org/10.1088/1751-8121/ac88a4}{Journal of Physics A: Mathematical and Theoretical}, \textbf{55} (2022), \href{https://doi.org/10.48550/arXiv.2105.15031}{\tt arXiv:2105.15031 [math-ph]}.
		
		\bibitem[SSh]{SchSh} G. Schrader, A. Shapiro, On $b$-Whittaker functions, \href{https://doi.org/10.48550/arXiv.1806.00747}{\tt arXiv:1806.00747 [math-ph]} (2018).
		
		\bibitem[T]{T} J. Teschner, \textit{From Liouville theory to the quantum geometry of Riemann surfaces}, in 14th International Congress on Mathematical Physics (2003), \href{https://doi.org/10.48550/arXiv.hep-th/0308031}{\tt arXiv:hep-th/0308031}.
		
		\bibitem[TV]{TV} J. Teschner, G. S. Vartanov, \textit{Supersymmetric gauge theories, quantization of $\mathcal{M}_{\mathrm {flat}} $, and conformal field theory}, \href{https://dx.doi.org/10.4310/ATMP.2015.v19.n1.a1}{Advances in Theoretical and Mathematical Physics} \textbf{19}:1 (2015) 1--135, \href{https://doi.org/10.48550/arXiv.1302.3778}{\tt arXiv:1302.3778 [hep-th]}.

	\end{thebibliography}
\end{document}